\newtheorem{theorem}{Theorem}[section]
\newtheorem{proposition}{Proposition}[section]
\newtheorem{lemma}{Lemma}[section]
\newtheorem{corollary}{Corollary}[section]
\theoremstyle{definition}
\newtheorem{defn}{Definition}[section]
\newcommand{\BE}{\mathbb{E}}
\newcommand{\BP}{\mathbb{P}}
\newcommand{\BR}{\mathbb{R}}
\newcommand{\EFN}{\mathrm{EFN}}
\newcommand{\MMMR}{\mathrm{MMMR}}
\newcommand{\Ext}{\mathrm{Ext}}
\theoremstyle{remark}
\newtheorem*{remark}{Remark}
\title{Optimal terminal dimensionality reduction in Euclidean space}
\author{Shyam Narayanan\thanks{Harvard University. \texttt{shyamnarayanan@college.harvard.edu}. Supported by a PRISE fellowship and a Herchel-Smith Fellowship.}
\and
Jelani Nelson\thanks{Harvard University. \texttt{minilek@seas.harvard.edu}. Supported by NSF
  CAREER award CCF-1350670, NSF grant IIS-1447471, ONR grant N00014-18-1-2562, ONR DORECG award N00014-17-1-2127, an Alfred P.\ Sloan Research Fellowship, and a Google Faculty Research Award.}}
\begin{document}

\maketitle

\begin{abstract}
Let $\varepsilon\in(0,1)$ and $X\subset\BR^d$ be arbitrary with $|X|$ having size $n>1$. The Johnson-Lindenstrauss lemma states there exists $f:X\rightarrow\BR^m$ with $m = O(\varepsilon^{-2}\log n)$ such that
\begin{equation*}
\forall x\in X\ \forall y\in X, \|x-y\|_2 \le \|f(x)-f(y)\|_2 \le (1+\varepsilon)\|x-y\|_2 .
\end{equation*}
We show that a strictly stronger version of this statement holds, answering one of the main open questions of \cite{MahabadiMMR18}: ``$\forall y\in X$'' in the above statement may be replaced with ``$\forall y\in\BR^d$'', so that $f$ not only preserves distances within $X$, but also distances {\it to} $X$ from the rest of space. Previously this stronger version was only known with the worse bound $m = O(\varepsilon^{-4}\log n)$. Our proof is via a tighter analysis of (a specific instantiation of) the embedding recipe of \cite{MahabadiMMR18}.
\end{abstract}

\section{Introduction}\label{sec:intro}
{\it Metric embeddings} may play a role in algorithm design when input data is geometric, in which the technique is applied as a pre-processing step to map input data living in some metric space $(\mathcal X, d_{\mathcal X})$ into some target space $(\mathcal Y, d_{\mathcal Y})$ that is algorithmically friendlier. One common approach is that of {\it dimensionality reduction}, in which $\mathcal X$ and $\mathcal Y$ are both subspaces of the same normed space, but where $\mathcal Y$ is of much lower dimension. Working with lower-dimensional embedded data then typically results in efficiency gains, in terms of memory, running time, and/or other resources.

A cornerstone result in this area is the Johnson-Lindenstrauss (JL) lemma \cite{JohnsonL84}, which provides dimensionality reduction for Euclidean space.
\begin{lemma}{\cite{JohnsonL84}}
Let $\varepsilon\in(0,1)$ and $X\subset\BR^d$ be arbitrary with $|X|$ having size $n>1$. There exists $f:X\rightarrow\BR^m$ with $m = O(\varepsilon^{-2}\log n)$ such that
\begin{equation}
\forall x,y\in  X, \|x-y\|_2 \le \|f(x)-f(y)\|_2 \le (1+\varepsilon)\|x-y\|_2 . \label{eqn:jl2}
\end{equation}
\end{lemma}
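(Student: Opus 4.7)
The plan is to exhibit a random linear map and show that it succeeds with positive probability. Concretely, I would take $f(x) = \tfrac{1}{\sqrt{m}}\Pi x$, where $\Pi\in\BR^{m\times d}$ has i.i.d.\ $N(0,1)$ entries (any subgaussian distribution with the right moments works, e.g.\ Rademacher). The embedding is linear in $x-y$, so the whole problem reduces to preserving the norms of the $\binom{n}{2}$ difference vectors $v=x-y$ for $x,y\in X$.

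First I would establish the standard \emph{distributional Johnson--Lindenstrauss} concentration: for any fixed $v\in\BR^d$ and any $\varepsilon\in(0,1)$,
\begin{equation*}
\BP\!\left[\,\bigl|\tfrac{1}{m}\|\Pi v\|_2^2 - \|v\|_2^2\bigr| > \varepsilon\|v\|_2^2\,\right] \le 2e^{-c\varepsilon^2 m}
\end{equation*}
for an absolute constant $c>0$. By homogeneity we may assume $\|v\|_2=1$, in which case $\|\Pi v\|_2^2$ is a $\chi^2_m$ random variable and the inequality follows from a standard Chernoff/Laplace-transform computation on sums of independent subexponential variables. This is the only genuinely analytic step, and it is the one I would expect to occupy most of the work; every other step is combinatorial packaging.

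Next I would take a union bound over the at most $\binom{n}{2}<n^2$ difference vectors $v=x-y$ with $x,y\in X$, $x\ne y$. Choosing $m = C\varepsilon^{-2}\log n$ with $C$ large enough to make $2n^2 e^{-c\varepsilon^2 m} < 1$, there exists at least one realization of $\Pi$ for which simultaneously
\begin{equation*}
(1-\varepsilon)\|x-y\|_2 \le \|f(x)-f(y)\|_2 \le (1+\varepsilon)\|x-y\|_2 \qquad \forall x,y\in X.
\end{equation*}

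Finally, to match the one-sided normalization in~\eqref{eqn:jl2} (lower bound exactly $\|x-y\|_2$ rather than $(1-\varepsilon)\|x-y\|_2$), I would rescale by replacing $f$ with $\tilde f := \tfrac{1}{1-\varepsilon}f$. This only inflates the upper distortion to $\tfrac{1+\varepsilon}{1-\varepsilon} = 1+O(\varepsilon)$, which is absorbed into the $O(\cdot)$ by running the above argument with $\varepsilon/3$ in place of $\varepsilon$; the target dimension remains $O(\varepsilon^{-2}\log n)$. The main obstacle is purely the tail bound in the first step; once that is in hand, union bound and rescaling finish the argument.
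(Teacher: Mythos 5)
Your proposal is correct and is exactly the standard random-projection proof of the Johnson--Lindenstrauss lemma: subgaussian (Gaussian) concentration for the squared norm of a fixed projected vector, a union bound over the $\binom{n}{2}$ pairwise differences, and a final rescaling by $1/(1-\varepsilon)$ to convert the symmetric $1\pm\varepsilon$ guarantee into the one-sided form in Eqn.~\eqref{eqn:jl2}. The paper does not prove this lemma but cites it, and in Section~\ref{sec:approach} it explicitly identifies this same construction (an i.i.d.\ subgaussian $\Pi$ scaled by $1/\sqrt m$, following \cite[Exercise 5.3.3]{Vershynin18}) as the standard proof.
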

It has been recently shown that the dimension $m$ of the target Euclidean space achieved by the JL lemma is best possible, at least for $\varepsilon \gg 1/\sqrt{\min(n,d)}$ \cite{LarsenN17} (see also \cite{AlonK17}).

The multiplicative factor on the right hand side of Eqn.~\eqref{eqn:jl2}, in this case $1+\varepsilon$, is referred to as the {\em distortion} of the embedding $f$. Recent work of Elkin et al.\ \cite{ElkinFN17} showed a stronger form of Euclidean dimensionality reduction for the case of constant distortion. Namely, they showed that in Eqn.~\eqref{eqn:jl2}, whereas $x$ is taken as an arbitrary point in $X$, $y$ may be taken as an arbitrary point in $\BR^d$. They called such an embedding a {\it terminal embedding}\footnote{More generally, a terminal embedding from $(\mathcal X, d_{\mathcal X})$ into $(\mathcal Y, d_{\mathcal Y})$ with terminal set $K\subset \mathcal X$ and {\it terminal distortion} $\alpha$ is a map $f:\mathcal X\rightarrow\mathcal Y$ s.t.\ $\exists\ c>0$ satisfying $d_{\mathcal X}(u,w) \le c\cdot d_{\mathcal Y}(f(u), f(w)) \le \alpha\cdot d_{\mathcal X}(u, w)$ for all $u\in K, w \in \mathcal X$.}. Though rather than achieving terminal distortion $1+\varepsilon$, their work only showed how to achieve constant terminal distortion with $m = O(\log n)$ for a constant that could be made arbitrarily close to $\sqrt{10}$ (see \cite[Theorem 1]{ElkinFN17}). Terminal embeddings can be useful in static high-dimensional computational geometry data structural problems. For example, consider nearest neighbor search over some finite database $X\subset\BR^d$. If one builds a data structure over $f(X)$ for a terminal embedding $f$, then {\it any} future query is guaranteed to be handled correctly (or, at least, the embedding will not be the source of failure). Contrast this with the typical approach where one uses a randomized embedding oblivious to the input (e.g.\ random projections) that preserves the distance between any fixed pair of vectors with probability $1-1/\mathop{poly}(n)$. One can verify that the embedding preserves distances {\it within} $X$ during pre-processing, but for any later query there is some non-zero probability that the embedding will fail to preserve the distance between the query point $q$ and some points in $X$.

Subsequent to \cite{ElkinFN17}, work of Mahabadi et al.\ \cite{MahabadiMMR18} gave a construction for terminal dimensionality reduction in Euclidean space achieving terminal distortion $1+\varepsilon$, with $m = O(\varepsilon^{-4}\log n)$. They asked as one of their main open questions (see \cite[Open Problem 3]{MahabadiMMR18}) whether it is possible to achieve this terminal embedding guarantee with $m = O(\varepsilon^{-2}\log n)$, which would be optimal given the JL lower bound of \cite{LarsenN17}. Our contribution in this work is to resolve this question affirmatively; the following is our main theorem.

\begin{theorem}
Let $\varepsilon\in(0,1)$ and $X\subset\BR^d$ be arbitrary with $|X|$ having size $n>1$. There exists $f:X\rightarrow\BR^m$ with $m = O(\varepsilon^{-2}\log n)$ such that
\begin{equation*}
\forall x\in X,\forall y\in \BR^d,\ \|x-y\|_2 \le \|f(x)-f(y)\|_2 \le (1+\varepsilon)\|x-y\|_2 . 
\end{equation*}
\end{theorem}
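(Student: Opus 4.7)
My plan is to follow the embedding recipe of Mahabadi et al.\ with a specific choice of random linear map $\Pi\colon\BR^d\to\BR^{m-1}$ (say, Gaussian with $m-1 = O(\varepsilon^{-2}\log n)$) and to analyze it via an argument that works entirely in squared norms. For $x\in X$ set $f(x) = (\Pi x, 0)\in\BR^m$, and for $y\in\BR^d$ set
\[
f(y) \;=\; (\Pi y,\, g(y)),\qquad g(y)^2 \;=\; \max_{x'\in X}\bigl[\|x'-y\|^2 - \|\Pi(x'-y)\|^2\bigr]_+.
\]
The lower bound $\|f(x)-f(y)\|_2 \ge \|x-y\|_2$ is immediate, as $g(y)^2 \ge \|x-y\|^2 - \|\Pi(x-y)\|^2$ for every $x\in X$.

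For the upper bound, let $x^*\in X$ attain the max and set $\alpha(v) := \|v\|^2 - \|\Pi v\|^2$, so $g(y)^2 = \alpha(x^*-y)$. A direct calculation gives $\|f(x)-f(y)\|^2 - \|x-y\|^2 = \alpha(x^*-y) - \alpha(x-y)$, and the polarization identity $\alpha(v)-\alpha(w) = (v-w)^T (I-\Pi^T\Pi)(v+w)$ rewrites this as $\Delta_\Pi\!\bigl(x-x^*,\,(x-y)+(x^*-y)\bigr)$, where $\Delta_\Pi(a,b) := \langle\Pi a,\Pi b\rangle - \langle a,b\rangle$. Splitting the second argument as $(x-x^*) + 2(x^*-y)$ expands this into $-\alpha(x-x^*) + 2\,\Delta_\Pi(x-x^*,\,x^*-y)$. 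The first summand is $O(\varepsilon\|x-x^*\|^2)$ by a standard JL union bound over $X-X$, so the main task is to bound $|\Delta_\Pi(u,w)|$ by $O(\varepsilon\|x-y\|^2)$ for $u = x-x^*\in X-X$ and $w = x^*-y\in\BR^d$. A preliminary step -- using that $\alpha(x^*-y) = \max_{x'}\alpha(x'-y)$ must itself be small, since otherwise $\|f(x)-f(y)\|$ would already be far from $\|x-y\|$ for the nearest-neighbor $x$ -- should let us conclude that $\|x^*-y\| = O(\|x-y\|)$, whence $\|u\|, \|w\| = O(\|x-y\|)$ and the desired bound reduces to a scale-free inner-product preservation $|\Delta_\Pi(u,w)| \le \varepsilon\|u\|\|w\|$.

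The principal obstacle -- and the step where the tighter $\varepsilon^{-2}$ dependence must arise -- is this one-sided inner-product preservation, where $u$ ranges over the finite family $X-X$ but $w$ ranges over all of $\BR^d$; standard JL only handles $w$ from a pre-specified finite family. My strategy is to decompose $w = w_\parallel + w_\perp$ along $V := \mathrm{span}(X-X)$ and its orthogonal complement, and control the two contributions separately. For $w_\parallel$, even though $V$ can be up to $(n-1)$-dimensional, only preservation against the $n^2$ fixed directions $u\in X-X$ is needed (not arbitrary pairs inside $V$), which I expect can be turned into an $O(\varepsilon^{-2}\log n)$ bound via a chaining argument exploiting the constrained family of $u$'s rather than arbitrary pairs in $V$. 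For $w_\perp$, the error $\Delta_\Pi(u,w_\perp) = \langle\Pi u,\Pi w_\perp\rangle$ has mean zero (as $u\perp w_\perp$) and must be controlled by concentration; the hardest part is ensuring that the resulting bound does not leak a $\sqrt{d/m}$ factor from a naive operator-norm estimate on $\|P_{V^\perp}\Pi^T\Pi u\|$, which is presumably where the $\sqrt{\varepsilon}$-slack in the analysis of \cite{MahabadiMMR18} arose. Carefully tracking everything in squared quantities -- and in particular never passing through unsquared norms -- is what I expect will deliver the improved $m = O(\varepsilon^{-2}\log n)$.
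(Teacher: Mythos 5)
Your proposed embedding $f(y) = (\Pi y, g(y))$ is not the construction of \cite{MahabadiMMR18} and unfortunately cannot work: it fails the upper bound for $y$ in ``bad'' directions for $\Pi$. Concretely, any $\Pi\in\BR^{(m-1)\times d}$ achieving Johnson--Lindenstrauss on $X$ has operator norm $\Theta(1+\sqrt{d/m})$; let $w$ be a top singular direction. Taking $y = x_1 + tw$ for some $x_1 \in X$ and $t$ large, one finds $g(y)^2 = \max_{x'}[\|x'-y\|^2-\|\Pi(x'-y)\|^2]_+ = 0$ (each argument is dominated by $-t^2(\sigma_1^2-1) < 0$), so $\|f(x_1)-f(y)\|^2 = \|\Pi(x_1-y)\|^2 \approx (d/m)\|x_1-y\|^2$, far exceeding $(1+\varepsilon)^2\|x_1-y\|^2$. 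The same problem surfaces in your analysis as the claim $|\Delta_\Pi(u,w)| \le \varepsilon\|u\|\|w\|$ for $u\in X-X$ fixed and $w\in\BR^d$ arbitrary: since $\sup_{w\ne 0}|\Delta_\Pi(u,w)|/\|w\| = \|(\Pi^\top\Pi - I)u\|_2$, and $\|\Pi^\top\Pi u\|_2 \approx \sqrt{d/m}\,\|u\|_2$ for a random subgaussian $\Pi$ (the Cauchy--Schwarz inequality $\|\Pi^\top\Pi u\| \ge \|\Pi u\|^2/\|u\|$ is very far from tight), this quantity is $\Theta(\sqrt{d/m})\|u\|_2$, not $O(\varepsilon)\|u\|_2$. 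The $\sqrt{d/m}$ factor you worry about in your last paragraph is not an artifact of a naive operator-norm estimate that careful chaining can remove; it is genuinely present, and it is also \emph{not} the source of the $\sqrt{\gamma}$ loss in \cite{MahabadiMMR18} (their construction never multiplies $\Pi$ against a vector outside $\mathrm{span}(X)$ either).

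The paper's route is structurally different precisely to avoid this. The extended image of $u\notin X$ is taken to be $(\Pi x_k + u', \sqrt{\|u-x_k\|^2-\|u'\|^2})$ for $x_k$ the nearest point, where $u'\in\BR^m$ is a point of norm $\le \|u-x_k\|$, \emph{not} of the form $\Pi u$; its existence is established by a von Neumann minimax argument (Lemma~\ref{ReductionLemma1}). That argument reduces everything to controlling $\bigl|\|\Pi P\|_2 - \|P\|_2\bigr|$ for $P$ ranging over $\mathrm{conv}(\pm(x_i-x_j)/\|x_i-x_j\|)$ --- a bounded, at most $(n-1)$-dimensional body with no dependence on $d$ --- which is exactly the ``$\varepsilon$-convex hull distortion'' property. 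That property is shown (via Dirksen's concentration bound for subgaussian chaos, plus the Majorizing Measures theorem to bound $\gamma_2$ of the convex hull by $O(\sqrt{\log n})$) to hold for $m = O(\varepsilon^{-2}\log n)$. If you want to salvage the ``squared-norm bookkeeping'' flavor of your writeup, you would need to first replace your $f(y)$ with the minimax-based extension and then redirect your chaining effort toward bounding the norm distortion of $\Pi$ on the convex hull rather than the inner-product error against arbitrary $w\in\BR^d$.
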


The embedding we analyze in this work is in fact one that fits within the family introduced in \cite{MahabadiMMR18}; our contribution is to provide a sharper analysis.

We note that unlike in \cite{MahabadiMMR18}, which provided a deterministic polynomial time construction of the terminal embedding, we here only provide a Monte Carlo polynomial time construction algorithm. Our error probability though only comes from mapping the points in $X$. If the points in $X$ are mapped to $\BR^m$ well (with low ``convex hull distortion'', which we define later), which happens with high probability, then our final terminal embedding is guaranteed to have low terminal distortion as map from all of $\BR^d$ to $\BR^m$.

\bigskip

\begin{remark}\label{rem:nonlinear}
Unlike in the JL lemma for which the embedding $f$ may be linear, the terminal embedding we analyze here (as was also in the case in \cite{ElkinFN17,MahabadiMMR18}) is nonlinear, as it must be. To see that it must be nonlinear, consider that any linear embedding with constant terminal distortion, $x\mapsto \Pi x$ for some matrix $\Pi\in\BR^{m\times d}$, must have $\Pi (x-y) \neq 0$ for any $x\in X$ and any $y\in\BR^d$ on the unit sphere centered at $x$. In other words, one needs $\mathop{ker}(\Pi)\neq \emptyset$, which is impossible unless $m\ge d$.
\end{remark}

\subsection{Overview of approach}\label{sec:approach}
We outline both the approach of previous works as well as our own. In all these approaches, one starts with an embedding $f:X\rightarrow\ell_2^m$ with good distortion, then defines an {\it outer extension} $f_\Ext$ as introduced in \cite{ElkinFN17} and defined explicitly in \cite{MahabadiMMR18}.

\begin{defn}
For $f:X\rightarrow \BR^m$ and $Z\supsetneq X$, we say $g:Z\rightarrow\BR^{m'}$ is an {\it outer extension} of $f$ if $m'\ge m$, and $g(x)$ for $x\in X$ has its first $m$ entries equal to $f(x)$ and last $m'-m$ entries all $0$.
\end{defn}

In \cite{ElkinFN17,MahabadiMMR18} and the current work, a terminal embedding is obtained by, for each $u\in \BR^d\backslash X$, defining an outer extension $f^{(u)}_\Ext$ for $Z = X\cup \{u\}$ with $m' = m+1$. Since $f^{(u)}_\Ext$ and $f^{(u')}_\Ext$ act identically on $X$ for any $u,u'\in \BR^d$, we can then define our final terminal embedding by $\tilde f(u) = (f(u), 0)$ for $u\in X$ and $\tilde f(u) = f^{(u)}_\Ext(u)$ for $u\in \BR^d\backslash X$, which will have terminal distortion $\sup_{u\in \BR^d\backslash X} \mathsf{Dist}(f^{(u)}_\Ext)$, where $\mathsf{Dist}(g)$ denotes the distortion of $g$. The main task is thus creating an outer extension with low distortion for a set $Z = X \cup \{u\}$ for some $u$. In all the cases that follow, we will have $f_\Ext(x) = (f(x),0)$ for $x\in X$, and we will then specify how to embed points $u\notin X$.

The construction of Elkin et al.\ \cite{ElkinFN17}, the ``EFN extension'', is as follows. Suppose $f:X\rightarrow\ell_p^m$ is an $\alpha$-distortion embedding. The EFN extension $f_\EFN:X\cup\{u\}\rightarrow\ell_p^{m+1}$ is then defined by $f_\EFN(u) = (f(\rho_{\ell_p}(u)), d(\rho_{\ell_p}(u), u))$, where $\rho_d(u) = \mathop{argmin}_{x\in X} d(x, u)$ for some metric $d$. In the case that we view $X\cup\{u\}$ as living in the metric space $\ell_p^d$, Elkin et al.\ showed that the EFN extension has terminal distortion at most $2^{(p-1)/p} \cdot ((2\alpha)^p + 1)^{1/p}$ \cite[Theorem 1]{ElkinFN17}. If $p=2$ and $f$ is obtained via the JL lemma to have distortion $\alpha\le 1+\varepsilon$, this implies the EFN extension would have terminal distortion at most $\sqrt{10} + O(\varepsilon)$.

The EFN extension does not in general achieve terminal distortion $1+\varepsilon$, even if starting with $f$ a perfect isometry. In fact, the bound of \cite{ElkinFN17} showing distortion at least $\sqrt{10}$ is sharp. Consider for example $X = \{-1,0,2\}\subset \BR$. Consider the identity map $f(x) = x$, which has distortion $1$ as a map from $(X, \ell_2)$ to $(\BR, \ell_2)$. Then $f_\EFN(-1) = (-1,0)$, $f_\EFN(0) = (0,0)$, $f_\EFN(2) = (2,0)$, and $f_\EFN(1) = (0, 1)$, and thus $\|f_\EFN(1) - f_\EFN(-1)\|_2 = \sqrt 2$ (distance shrunk by a $\sqrt 2$ factor) and $\|f_\EFN(1) - f_\EFN(2)\|_2 = \sqrt 5$ (distance increased by a $\sqrt 5$ factor). Thus $f_\EFN$ has terminal distortion at least (in fact exactly equal to) $\sqrt{10}$. This example in fact shows sharpness for $\ell_p$ for all $p\ge 1$.

Thus to achieve terminal distortion $1+\varepsilon$, the work of \cite{MahabadiMMR18} had to develop a new outer extension, the ``MMMR extension'', which they based on the following core lemma.
\begin{lemma}[{\cite[Lemma 3.1 (rephrased)]{MahabadiMMR18}}]\label{lem:mmmr}
Let $X$ be a finite subset of $\ell_2^d$, and suppose $f:X\rightarrow\ell_2^m$ has distortion $1+\gamma$. Fix some $u\in \BR^d$, and define $x_0 := \rho_{\ell_2}(u)$. Then $\exists u'\in\BR^m$ s.t.
\begin{itemize}
\item $\|u' - f(x_0)\|_2 \le \|u - x_0\|_2$, and
\item $\forall x\in X$, $|\langle u' - f(x_0), f(x) - f(x_0)\rangle - \langle u - x_0, x - x_0\rangle| \le 3\sqrt{\gamma}(\|x - x_0\|_2^2 + \|u - x_0\|_2^2)$  
\end{itemize}
\end{lemma}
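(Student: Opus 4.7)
The plan is to formulate the conclusion as a convex feasibility problem for $u'$, dualize via Sion's minimax theorem, and prove the resulting scalar inequality using a quadratic-scale inner-product preservation together with a case split that extracts the $\sqrt{\gamma}$ rate.

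After translating $X$ and $f(X)$ and rescaling, I assume without loss of generality that $x_0 = 0$, $f(x_0) = 0$, and $\|u\|_2 = 1$, reducing the task to finding $u' \in B(0,1) \subset \BR^m$ such that $|\langle u', f(x)\rangle - \langle u, x\rangle| \le 3\sqrt{\gamma}(\|x\|_2^2 + 1)$ for every $x \in X$. From polarization together with the distortion hypothesis (and $f(0) = 0$), I first record the inner-product preservation estimate $|\langle f(x), f(y)\rangle - \langle x, y\rangle| \le 3\gamma(\|x\|_2^2 + \|y\|_2^2)$ on $X$.

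Next I dualize: writing the target as $\inf_{v \in B(0,1)} \max_{x \in X} |\langle v, f(x)\rangle - \langle u, x\rangle|/(\|x\|_2^2 + 1) \le 3\sqrt{\gamma}$ and lifting the absolute-value maximum to a supremum over the probability simplex on $X \times \{-1, +1\}$, the objective becomes bilinear in $(v, p)$, so Sion's minimax theorem lets me swap $\inf$ and $\sup$. Parametrizing the dual variable by $\beta_x := (p_{x,+} - p_{x,-})/(\|x\|_2^2 + 1)$, feasibility reduces to showing that for every signed $\beta \in \BR^{X}$ with $\sum_x |\beta_x|(\|x\|_2^2 + 1) \le 1$, the vectors $a := \sum_x \beta_x f(x)$ and $b := \sum_x \beta_x x$ satisfy $\|b\|_2 \le \|a\|_2 + 3\sqrt{\gamma}$ (the passage from $\langle u, b\rangle$ to $\|b\|_2$ uses $|\langle u, b\rangle| \le \|u\|_2 \|b\|_2 = \|b\|_2$). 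For this, expand $\|a\|_2^2 - \|b\|_2^2 = \sum_{x,y} \beta_x \beta_y(\langle f(x), f(y)\rangle - \langle x, y\rangle)$ and apply the preservation estimate to bound it by $6\gamma P Q$, where $P := \sum_x |\beta_x|$ and $Q := \sum_x |\beta_x| \|x\|_2^2$; since $P + Q \le 1$, AM--GM yields $PQ \le 1/4$, hence $|\|a\|_2^2 - \|b\|_2^2| \le 3\gamma/2$. The proof concludes with a case split on $\|b\|_2$: if $\|b\|_2 \le 3\sqrt{\gamma}$ the conclusion is immediate, while if $\|b\|_2 > 3\sqrt{\gamma}$ then $\|b\|_2 - \|a\|_2 \le (3\gamma/2)/\|b\|_2 < \sqrt{\gamma}/2 \le 3\sqrt{\gamma}$.

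The main obstacle is the $\gamma$-versus-$\sqrt{\gamma}$ mismatch between hypothesis and conclusion: the distortion hypothesis controls squared quantities at the $\gamma$ scale, whereas the conclusion demands a linear-scale $\sqrt{\gamma}$ bound on an inner product. The case split on $\|b\|_2$ bridges this gap by using the trivial inequality $|\langle u, b\rangle| \le \|b\|_2$ when $b$ is small and the square-root linearization $\|b\|_2 - \sqrt{\|b\|_2^2 - c} \le c/\|b\|_2$ when $b$ is large; the threshold $\|b\|_2 = 3\sqrt{\gamma}$ is tuned so that both regimes deliver exactly the bound $\|b\|_2 \le \|a\|_2 + 3\sqrt{\gamma}$.
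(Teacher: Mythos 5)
The paper does not actually prove Lemma~\ref{lem:mmmr}; it cites it directly from \cite{MahabadiMMR18} and then proves a strengthened variant (Lemma~\ref{ReductionLemma1}) under the stronger hypothesis of convex hull distortion. Your proof is correct and its skeleton — formulate the existence of $u'$ as a min-max over the Euclidean ball and a simplex-like dual polytope, apply a minimax theorem, and reduce to a scalar inequality on convex combinations $a=\sum\beta_x f(x)$, $b=\sum\beta_x x$ — is precisely the skeleton the paper uses for Lemma~\ref{ReductionLemma1}, and according to the paper is the skeleton of the original MMMR18 argument as well. The genuine point of divergence is at the dual step: in Lemma~\ref{ReductionLemma1} convex hull distortion gives $\|b\|_2-\|a\|_2\le\varepsilon\|\lambda\|_1$ directly (no squaring, no $\sqrt{\gamma}$ loss), whereas with only a pointwise $(1+\gamma)$-distortion hypothesis you are forced to work at the squared scale via the polarization estimate $|\langle f(x),f(y)\rangle-\langle x,y\rangle|\le 3\gamma(\|x\|_2^2+\|y\|_2^2)$, and your case split on $\|b\|_2$ is exactly the mechanism that converts the resulting $O(\gamma)$ bound on $\|b\|_2^2-\|a\|_2^2$ into an $O(\sqrt{\gamma})$ bound on $\|b\|_2-\|a\|_2$. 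This cleanly explains, at the proof level, where the $\gamma\to\sqrt{\gamma}$ loss in the black-box MMMR lemma comes from and why replacing distortion by convex hull distortion removes it — which is the conceptual heart of the paper. Two small points worth tightening: the bound $(1+\gamma)^2-1\le 3\gamma$ underlying your polarization step needs $\gamma\le 1$, so you should note that for $\gamma\ge 1/36$ the lemma is already trivial by taking $u'=f(x_0)$ (Cauchy--Schwarz then gives the second bullet with constant $1/2\le 3\sqrt{\gamma}$); and the threshold $3\sqrt{\gamma}$ in your case split is tuned only for the small-$\|b\|_2$ branch, since the large-$\|b\|_2$ branch actually yields the stronger $\sqrt{\gamma}/2$.
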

Mahabadi et al.\ then used the $u'$ promised by Lemma~\ref{lem:mmmr} as part of a construction that takes a $(1+\gamma)$-distortion embedding $f:X\rightarrow \ell_2^m$ and uses it in a black box way to construct an outer extension $f_\MMMR:X\cup\{u\}\rightarrow\ell_2^{m+1}$. In particular, they define $f_\MMMR(u) = (u', \sqrt{\|u - x_0\|_2^2 - \|u' - f(x_0)\|_2^2})$. It is clear this map perfectly preserves the distance from $u$ to $x_0$; in \cite[Theorem 1.5]{MahabadiMMR18}, it is furthermore shown that $f_\MMMR$ preserves distances from $u$ to all of $X$ up to a $1+O(\sqrt{\gamma})$ factor. Thus one should set $\gamma = \Theta(\varepsilon^2)$ so that $f_\MMMR$ has distortion $1+\varepsilon$, which is achieved by starting with an $f$ guaranteed by the JL lemma with $m = \Theta(\varepsilon^{-4}\log n)$.

They then showed that this loss is {\it tight}, in the sense that there exist $X$, $u$, $f:X\rightarrow\BR^m$, where $f$ has distortion $1+\gamma$, such that {\it any} outer extension $f_\Ext$ to domain $X\cup\{u\}$ has distortion $1+\Omega(\sqrt{\gamma})$ \cite[Section 3.2]{MahabadiMMR18}. Thus, seemingly a new approach is needed to achieve $m = O(\varepsilon^{-2}\log n)$.

One may be discouraged by the above-mentioned tightness of the $\gamma\rightarrow\Omega(\sqrt{\gamma})$ loss, but in this work we show that, in fact, the MMMR extension can be made to provide $1+\varepsilon$ distortion with the optimal $m = O(\varepsilon^{-2}\log n)$! The tightness result mentioned in the last paragraph is only an obstacle to using the low-distortion property of $f$ {\it in a black box way}, as it is only shown that there {\it exist} $f$ where the $\gamma\rightarrow\Omega(\sqrt{\gamma})$ loss is necessary. However, the $f$ we are using is not an arbitrary $f$, but rather is the $f$ obtained via the JL lemma. A standard way of proving the JL lemma is to choose $\Pi\in\BR^{m\times d}$ with i.i.d.\ subgaussian entries, scaled by $1/\sqrt m$ for $m = \Theta(\varepsilon^{-2}\log n)$ \cite[Exercise 5.3.3]{Vershynin18}. The low-distortion embedding is then $f(x) = \Pi x$. We show in this work that by not just using that this $f$ is a low-distortion embedding for $X$, but rather that it satisfies a stronger property we dub {\em convex hull distortion} (which we show $x\mapsto\Pi x$ does satisfy with high probability), one can achieve the desired terminal embedding result with optimal $m$.

\begin{defn}\label{def:chull}
For $T\subset S^{d-1}$ a subset of the unit sphere in $\BR^d$, and $\varepsilon\in(0,1)$, we say for $\Pi\in\BR^{m\times d}$ that $\Pi$ {\it provides $\varepsilon$-convex hull distortion for $T$} if 
$$
\forall x\in \mathrm{conv}(T),\ |\|\Pi x\|_2 - \|x\|_2 | < \varepsilon
$$
where $\mathrm{conv}(T) := \{\sum_i \lambda_i t_i : \forall i\ t_i\in T, \lambda_i \ge 0, \sum_i \lambda_i = 1\}$ denotes the convex hull of $T$.
\end{defn}

We show that a random $\Pi$ with subgaussian entries provides $\varepsilon$-convex hull distortion for $T$ with probability at least $1-\delta$ as long as $m = \Omega(\varepsilon^{-2}\log(|T|/(\varepsilon\delta)))$. We then replace Lemma~\ref{lem:mmmr} with a new lemma that shows that as long as $f(x) = \Pi x$ does not just have $1+\gamma$ distortion for $X$, but rather provides $\gamma$-convex hull distortion for $T = \{(x-y) / \|x - y\|_2 : x,y\in X\}$, then the $3\sqrt{\gamma}$ term on the RHS of the second bullet of Lemma~\ref{lem:mmmr} can be replaced with $O(\gamma)$ (plus one other technical improvement; see Lemma~\ref{ReductionLemma1}). Note $|T| = \binom n2$, so $\log|T| = O(\log n)$. We then show how to use the modified lemma to achieve an outer extension with $m = O(\varepsilon^{-2}\log(n/\varepsilon)) = O(\varepsilon^{-2}\log n)$. This last equality holds since we may assume $\varepsilon = \Omega(1/\sqrt n)$, since otherwise there is a trivial terminal embedding with $m = n = O(\varepsilon^{-2})$ with no distortion: if $d \le n$, take the identity map. Else, translate $X$ so $0\in X$; then $E:=\mathop{span}(X)$ has $\mathop{dim}(E) \le  n-1$. By rotation, we can assume $E=\mathop{span}\{e_1,\ldots,e_{n-1}\}$ so that every $x\in E$ can be written as $\sum_{i=1}^{n-1} \alpha(x)_i e_i$ for some vector $\alpha(x)\in\BR^{n-1}$. We can then define a terminal embedding $\tilde{f}:\BR^d\rightarrow\BR^n$ with $\tilde f(x) = (\alpha(\mathop{proj}_E(x)), \|\mathop{proj}_{E^\perp}(x)\|_2)$ for all $x\in\BR^d$. Here $\mathop{proj}_E$ denotes orthogonal projection onto $E$.

\section{Preliminaries}

For our optimal terminal embedding analysis, we rely on two previous results. The first result is the von Neumann Minimax theorem \cite{vonNeumann28}, which was also used in the terminal embedding analysis in \cite{MahabadiMMR18}. The theorem states the following:

\begin{theorem} \label{Minimax}
    Let $X \subset \BR^n$ and $Y \subset \BR^m$ be compact convex sets. Suppose that $f: X \times Y \to \BR$ is a continuous function that satisfies the following properties:
\begin{enumerate}
    \item $f(\cdot, y): X \to \BR$ is convex for any fixed $y \in Y$,
    \item $f(x, \cdot): Y \to \BR$ is concave for any fixed $x \in X.$
\end{enumerate}
    Then, we have that
\[\min\limits_{x \in X} \max\limits_{y \in Y} f(x, y) = \max\limits_{y \in Y} \min\limits_{x \in X} f(x, y).\]
\end{theorem}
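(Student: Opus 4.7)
The plan is to prove the two inequalities separately. The inequality $\max_y \min_x f(x,y) \le \min_x \max_y f(x,y)$ is immediate: for any $(x_0, y_0) \in X \times Y$, one has $\min_{x} f(x, y_0) \le f(x_0, y_0) \le \max_{y} f(x_0, y)$; taking $\max$ over $y_0$ on the left and $\min$ over $x_0$ on the right yields the bound. Compactness together with continuity of $f$ guarantees all extrema are attained. For the nontrivial reverse inequality, I would argue by contradiction: assume $v_1 := \min_x \max_y f(x,y) > v_2 := \max_y \min_x f(x,y)$ and pick $c \in (v_2, v_1)$, aiming to derive a contradiction via Brouwer's fixed point theorem.

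For each $y \in Y$, let $U_y := \{x \in X : f(x, y) > c\}$. This is open by continuity and convex by convexity of $f(\cdot, y)$. Since every $x \in X$ satisfies $\max_y f(x,y) \ge v_1 > c$, the collection $\{U_y\}_{y \in Y}$ covers $X$; by compactness extract a finite subcover $U_{y_1}, \dots, U_{y_k}$ with a subordinate continuous partition of unity $\{\phi_i\}$, for instance via $\phi_i(x) = d(x, X \setminus U_{y_i}) / \sum_j d(x, X \setminus U_{y_j})$. Define a continuous $\Phi : X \to Y$ by $\Phi(x) = \sum_i \phi_i(x)\, y_i$, which lies in $\mathrm{conv}\{y_1, \dots, y_k\} \subset Y$ by convexity of $Y$. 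Concavity of $f(x,\cdot)$ then gives
\[
f(x, \Phi(x)) \ge \sum_i \phi_i(x) f(x, y_i) > c \quad \text{for every } x \in X,
\]
since whenever $\phi_i(x) > 0$ we must have $x \in U_{y_i}$ and hence $f(x,y_i) > c$, while $\sum_i \phi_i(x) = 1$.

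By the completely symmetric construction on the $Y$ side, using the sets $V_x := \{y \in Y : f(x, y) < c\}$ (open and convex by continuity and concavity of $f(x,\cdot)$, covering $Y$ because $\min_x f(x, y) \le v_2 < c$ for every $y$), I obtain a continuous $\Psi : Y \to X$ with $f(\Psi(y), y) < c$ for every $y \in Y$. Now consider the continuous self-map $F : X \times Y \to X \times Y$ defined by $F(x, y) := (\Psi(y), \Phi(x))$. Since $X \times Y \subset \BR^{n+m}$ is compact and convex, Brouwer's fixed point theorem produces $(x^\ast, y^\ast)$ with $x^\ast = \Psi(y^\ast)$ and $y^\ast = \Phi(x^\ast)$, whence
\[
c < f(x^\ast, \Phi(x^\ast)) = f(x^\ast, y^\ast) = f(\Psi(y^\ast), y^\ast) < c,
\]
the desired contradiction.

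The main obstacle is setting up the two one-sided constructions so that a fixed-point argument cleanly glues them together; once the level sets are observed to be open and convex, and a partition of unity is used to promote a finite subcover into continuous maps $\Phi, \Psi$ respecting the required inequalities, the Brouwer step is painless. Everything else—existence of the extrema, existence of partitions of unity in a compact metric space, and convex combinations staying inside $X$ or $Y$—is standard.
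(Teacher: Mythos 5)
The paper does not prove this statement; it is cited as a preliminary from \cite{vonNeumann28} and used as a black box, so there is no internal proof to compare against. Your argument is a correct, self-contained proof along the standard partition-of-unity-plus-Brouwer route (essentially the argument one finds in Nikaid\^{o} or in modern game-theory texts), and all the substantive steps check out: the easy inequality, the covering claims for $\{U_y\}$ and $\{V_x\}$, the continuity of $\Phi$ and $\Psi$, the convex-combination estimates from convexity of $f(\cdot,y)$ and concavity of $f(x,\cdot)$, and the application of Brouwer to the compact convex set $X\times Y$.

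One claim you make in passing is false, though harmless because it is never used: you assert that $U_y=\{x: f(x,y)>c\}$ is convex ``by convexity of $f(\cdot,y)$'' and that $V_x=\{y: f(x,y)<c\}$ is convex ``by concavity of $f(x,\cdot)$.'' A \emph{sub}level set of a convex function is convex, but the \emph{super}level set $U_y$ generally is not (take $f(x)=x^2$ and $c=1$); symmetrically, the strict \emph{sub}level set $V_x$ of a concave function generally is not convex. Since your construction only needs the $U_{y_i}$ and $V_{x_j}$ to be open (so that the distance-function partition of unity is continuous, positive precisely on the corresponding set, and sums to something nonzero by the finite-subcover property), and the convexity you actually invoke is that of $Y$ (resp.\ $X$) so that $\Phi(x)\in Y$ and $\Psi(y)\in X$, the proof survives; but you should delete the spurious ``and convex'' clauses. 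You may also want to state explicitly that $X$ and $Y$ are nonempty, as Brouwer's theorem (and the extrema themselves) need this.
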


The second result is a result of Dirksen \cite{Dirksen15, Dirksen16} that provides a uniform tail bound on subgaussian empirical processes. To explain the result, we first make the following definitions:

\begin{defn}
    A semi-metric $d$ on a space $X$ is a function $X \times X \to \BR_{\ge 0}$ such that $d(x, x) = 0,$ $d(x, y) = d(y, x),$ and $d(x, y)+d(y, z) \ge d(x, z)$ for all $x, y, z \in X$. 
\end{defn}

Note a semi-metric may have $d(x,y) = 0$ for $x\neq y$.

\begin{defn}
    Given a semi-metric $d$ on $T$, and subset $S \subset T,$ define $d(t, S) = \inf_{s \in S} d(t, s)$ for any point $t \in T$.
\end{defn}

\begin{defn}
    Given a semi-metric $d$ on $T$, define
\[\gamma_2(T, d) = \inf\limits_{\{S_r\}_{r=0}^\infty} \sup\limits_{t \in T} \sum\limits_{r \ge 0} 2^{r/2} d(t, S_r),\]
    where the first infimum runs over all subsequences $S_0 \subset S_1 \subset \ldots \subset T$ where $|S_0| = 1$ and $|S_r| \le 2^{2^r}$.
\end{defn}

\begin{defn}
    For any random variable $X$, we define the \textit{subgaussian norm} of $X$ as
\[\|X\|_{\psi_2} = \inf \left\{C \ge 0: \BE\left(e^{X^2/C^2}\right) \le 2\right\}.\]
\end{defn}

It is well known that the subgaussian norm as defined above indeed defines a norm \cite{BoucheronLM13}. We also note the following proposition:

\begin{proposition} \label{IIDSubgaussianAddition} \cite{BoucheronLM13}
    There exists some constant $k$ such that if $X_1, ..., X_n$ are i.i.d. subgaussians with mean $0$ and subgaussian norm $C$, then for any $a_1, ..., a_n \in \BR,$ $a_1X_1+...+a_nX_n$ is subgaussian with subgaussian norm $\le k \|a\|_2 C,$ where $\|a\|_2 = \sqrt{a_1^2+...+a_n^2}.$
\end{proposition}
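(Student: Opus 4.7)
My plan is to prove this by the standard route that passes through a moment-generating-function (MGF) characterization of subgaussianity, which is essentially the content of the ``subgaussian dictionary'' appearing in standard references such as Boucheron--Lugosi--Massart and Vershynin's text. The key intermediate fact is that for a mean-zero random variable $X$ with $\|X\|_{\psi_2}\le C$, one has $\BE[\exp(\lambda X)] \le \exp(c\lambda^2 C^2)$ for all $\lambda\in\BR$ and some absolute constant $c>0$. Once this MGF bound is in hand, independence handles the weighted sum cleanly, and one then inverts the MGF bound back to a $\psi_2$ bound.

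First I would establish the MGF bound. Expanding the defining inequality $\BE[\exp(X^2/C^2)]\le 2$ as a power series yields $\BE[X^{2k}]\le 2\cdot k!\cdot C^{2k}$ for every $k\ge 0$, and Cauchy--Schwarz against adjacent even moments gives the analogous bound for odd moments up to a constant factor. Substituting these into the Taylor expansion of $\BE[\exp(\lambda X)]$ and using $\BE[X]=0$ to annihilate the linear term, one obtains $\BE[\exp(\lambda X)]\le \exp(c\lambda^2 C^2)$ after summing the resulting series; the odd-term contribution is absorbed into a slightly larger constant $c$.

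Next, by independence and the scaling $\|a_iX_i\|_{\psi_2}=|a_i|C$, for $S=\sum_i a_iX_i$ and any $\lambda\in\BR$,
\[
\BE[\exp(\lambda S)] = \prod_{i=1}^n \BE[\exp(\lambda a_iX_i)] \le \prod_{i=1}^n \exp(c\lambda^2 a_i^2 C^2) = \exp\!\bigl(c\lambda^2\|a\|_2^2 C^2\bigr),
\]
so $S$ satisfies the same MGF bound with parameter $\|a\|_2 C$ in place of $C$. Optimizing a Chernoff bound in $\lambda$ then gives the tail estimate $\BP(|S|\ge t)\le 2\exp(-t^2/(4c\|a\|_2^2 C^2))$.

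Finally, I invert back to a $\psi_2$ bound. Setting $C' = k\|a\|_2 C$ for a constant $k$ to be chosen, I write
\[
\BE[\exp(S^2/C'^2)] = 1 + \int_1^\infty \BP\bigl(\exp(S^2/C'^2)\ge s\bigr)\,ds,
\]
substitute the tail bound, and verify that for $k$ sufficiently large (depending only on $c$) the right-hand side is at most $2$, giving $\|S\|_{\psi_2}\le k\|a\|_2 C$ as claimed. The only real obstacle is the bookkeeping of absolute constants across the chain $\psi_2$ bound $\Rightarrow$ moment bounds $\Rightarrow$ MGF bound $\Rightarrow$ tail bound $\Rightarrow$ $\psi_2$ bound; each step is routine, but constants must be tracked carefully so that the final $k$ is indeed absolute and independent of $n$, $C$, and the weights $a_i$.
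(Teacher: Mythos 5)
The paper does not prove this proposition; it is stated with a citation to Boucheron--Lugosi--Massart and used as a black box. Your argument is a correct rendition of the standard textbook proof (the one appearing in the cited reference and in Vershynin): the $\psi_2$ bound yields moment bounds via the power-series expansion of $\BE[\exp(X^2/C^2)]$, which together with $\BE X = 0$ (to annihilate the linear term) gives the two-sided MGF bound $\BE[\exp(\lambda X)]\le\exp(c\lambda^2C^2)$; independence makes the MGF of the weighted sum multiplicative; Chernoff gives a subgaussian tail with parameter $\|a\|_2 C$; and integrating the tail inverts back to a $\psi_2$ bound with an absolute constant loss. The only place that deserves a touch more care than your sketch gives it is the MGF step, where one typically splits into $|\lambda|C\le 1$ (power series plus moment bounds, where the vanishing linear term is essential to get a bound of the form $1+O(\lambda^2C^2)$) and $|\lambda|C>1$ (Young's inequality $\lambda x\le \lambda^2C^2/2 + x^2/(2C^2)$ followed by the defining $\psi_2$ bound), but this is exactly the routine constant-tracking you flag, and it poses no real obstacle.
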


\begin{theorem} \label{DirksenInequality} \cite[Theorem 3.2]{Dirksen16}
    Let $T$ be a set, and suppose that for every $t \in T$ and every $1 \le i \le m,$ $X_{t, i}$ is a random variable with finite expected value and variance. For each $t \in T,$ let
\[A_t = \frac{1}{m} \sum\limits_{i = 1}^{m} \left(X_{t, i}^2 - \BE X_{t, i}^2\right).\]
    Consider the following semi-metric on $T$: for $s, t \in T$, define
\[d_{\psi_2}(s, t) := \max\limits_{1 \le i \le m} \|X_{s, i} - X_{t, i}\|_{\psi_2}\]
    and define 
\[\overline{\Delta}_{\psi_2}(T) := \sup\limits_{t \in T} \max\limits_{1 \le i \le m} \|X_{t, i}\|_{\psi_2}.\]
    Then, there exist constants $c, C > 0$ such that for all $u \ge 1,$
\[\BP\left(\sup\limits_{t \in T} |A_t| \ge C \left(\frac{1}{m} \gamma_2^2 (T, d_{\psi_2}) + \frac{1}{\sqrt{m}} \overline{\Delta}_{\psi_2}(T) \gamma_2(T, d_{\psi_2})\right) + c\left(\sqrt{u} \frac{\overline{\Delta}_{\psi_2}^2(T)}{\sqrt{m}} + u \frac{\overline{\Delta}_{\psi_2}^2(T)}{m}\right)\right) \le e^{-u}.\]
\end{theorem}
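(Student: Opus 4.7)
The plan is to prove Theorem~\ref{DirksenInequality} by generic chaining applied to the quadratic empirical process $\{A_t\}_{t \in T}$. The key observation is that each increment $A_t - A_s$ is an average of $m$ independent centered products,
\[A_t - A_s = \frac{1}{m}\sum_{i=1}^m \Bigl( (X_{t,i}-X_{s,i})(X_{t,i}+X_{s,i}) - \BE\bigl[(X_{t,i}-X_{s,i})(X_{t,i}+X_{s,i})\bigr]\Bigr),\]
where each summand factors as a ``local'' variable of $\psi_2$-norm at most $d_{\psi_2}(s,t)$ times a ``global'' variable of $\psi_2$-norm at most $2\overline{\Delta}_{\psi_2}(T)$.

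First I would establish a Bernstein-type tail bound for a fixed pair $(s,t)$. H\"older's inequality for Orlicz norms gives $\|(X_{t,i}-X_{s,i})(X_{t,i}+X_{s,i})\|_{\psi_1} \lesssim d_{\psi_2}(s,t)\cdot\overline{\Delta}_{\psi_2}(T)$, so the $m$ centered summands are independent subexponentials. Bernstein's inequality then yields
\[\BP\bigl(|A_t-A_s| > \lambda\bigr) \le 2\exp\!\Bigl(-c\min\Bigl\{\tfrac{m\lambda^2}{d_{\psi_2}(s,t)^2\,\overline{\Delta}_{\psi_2}(T)^2},\;\tfrac{m\lambda}{d_{\psi_2}(s,t)\,\overline{\Delta}_{\psi_2}(T)}\Bigr\}\Bigr),\]
i.e.\ the increments are subgaussian at scale $d_{\psi_2}\,\overline{\Delta}_{\psi_2}/\sqrt m$ and subexponential at scale $d_{\psi_2}\,\overline{\Delta}_{\psi_2}/m$.

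Next I would run Talagrand's generic chaining on $T$ using an admissible sequence $\{T_r\}_{r\ge 0}$ with $|T_r|\le 2^{2^r}$ that approximately realizes $\gamma_2(T,d_{\psi_2})$. Telescoping $A_t - A_{\pi_0(t)} = \sum_{r\ge 1}(A_{\pi_r(t)}-A_{\pi_{r-1}(t)})$ along best approximations $\pi_r(t)\in T_r$, and taking a union bound over the at most $2^{2^{r+1}}$ consecutive pairs at level $r$ with deviation parameter $2^r + u$, produces two chaining sums: the subgaussian branch contributes $\lesssim \overline{\Delta}_{\psi_2}(T)\,\gamma_2(T,d_{\psi_2})/\sqrt m$, and the subexponential branch contributes $\lesssim \gamma_2(T,d_{\psi_2})^2/m$ after using $\sum_r 2^r d_{\psi_2}(t,T_r)^2 \lesssim \gamma_2(T,d_{\psi_2})^2$. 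The residual deviation of the single anchor $A_{\pi_0(t)}$ (a sum of $m$ centered squares with $\psi_1$-norm $\lesssim \overline{\Delta}_{\psi_2}^2(T)$) contributes $\sqrt u\,\overline{\Delta}_{\psi_2}^2(T)/\sqrt m + u\,\overline{\Delta}_{\psi_2}^2(T)/m$ via another application of Bernstein.

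The main obstacle will be the careful handling of the two-regime chaining: at each scale $r$ one must decide whether the Gaussian or the exponential branch of Bernstein is active for the increment $A_{\pi_r(t)} - A_{\pi_{r-1}(t)}$ and split the chain accordingly. This is the standard subtlety in ``mixed tail'' generic chaining and is precisely what produces the hybrid four-term bound. Given a black-box mixed-tail chaining theorem, the argument reduces to the increment estimate above plus bookkeeping; absent such a theorem one would have to carry out the level-by-level splitting by hand, tracking the threshold between the two Bernstein regimes as a function of $r$.
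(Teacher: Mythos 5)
The paper does not prove Theorem~\ref{DirksenInequality}; it is imported verbatim from Dirksen~\cite{Dirksen16}, whose proof in turn rests on the mixed-tail generic-chaining machinery of~\cite{Dirksen15}. Your sketch correctly reconstructs the skeleton of that argument: write $X_{t,i}^2 - X_{s,i}^2 = (X_{t,i}-X_{s,i})(X_{t,i}+X_{s,i})$, use $\|YZ\|_{\psi_1}\lesssim\|Y\|_{\psi_2}\|Z\|_{\psi_2}$ to get a $\psi_1$ increment bound of order $d_{\psi_2}(s,t)\,\overline{\Delta}_{\psi_2}(T)$, apply Bernstein across the $m$ independent coordinates, run generic chaining, and handle the anchor $A_{t_0}$ with one more application of Bernstein. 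This is indeed the structure of Dirksen's proof, and your observation that $\sum_r 2^{r}d_{\psi_2}(t,S_r)^2 \lesssim \gamma_2(T,d_{\psi_2})^2$ (since $\sup_r 2^{r/2}d(t,S_r)\le\sum_r 2^{r/2}d(t,S_r)$) is the right way to collapse the subexponential branch.

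However, the specific bookkeeping you propose --- a uniform union bound at level $r$ with deviation parameter $2^r+u$ --- does not produce the separated form of the bound. In the subgaussian branch, splitting $\sqrt{2^r+u}\le\sqrt{2^r}+\sqrt u$ creates a cross term of order $\sqrt u\,\overline{\Delta}_{\psi_2}(T)\,\gamma_2(T,d_{\psi_2})/\sqrt m$ (since the sum of level-$r$ link lengths is $\lesssim\gamma_2(T,d_{\psi_2})$), and this term is \emph{not} dominated by the four terms of the theorem: take $\overline{\Delta}_{\psi_2}=1$, $\gamma_2=K$, $m=K^2$, $u=K^2$, where the stated bound is $O(1)$ but the cross term is $K$. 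The clean separation of the $\gamma$-terms (no $u$) from the $\overline{\Delta}^2$-deviation terms (with $u$) is exactly what Dirksen's chaining theorem in~\cite{Dirksen15} is engineered to deliver, via a more careful level-dependent treatment of the deviation parameter rather than a uniform inflation by $u$. You flag this correctly when you say a ``black-box mixed-tail chaining theorem'' is needed; with that black box in hand, your reduction to the increment estimate is sound, but the ``by hand'' version you describe leaves a real gap that would need the level-cutting argument to close.
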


\section{Construction of the Terminal Embedding}

\subsection{Universal Dimensionality Reduction with an additional $\ell_1$ condition} \label{L1JL}

Here we show that for all $X = \{x_1, \dots, x_n\}\subset\BR^d$ a set of unit norm vectors, there exists $\Pi\in\BR^{m\times d}$ for $m = O(\varepsilon^{-2} \log n)$ providing $\varepsilon$-convex hull distortion for $X$, as defined in Definition~\ref{def:chull}.

If $\varepsilon^{-2} > n,$ this construction follows by projecting onto a spanning subspace of $x_1, \dots, x_n$ and choosing an orthonormal basis.  For $\varepsilon^{-2} < n$, our goal is to show that if $\Pi \in \BR^{m \times d}$ is a normalized random matrix with i.i.d.\ subgaussian entries (normalized by $1/\sqrt m$), then $\Pi$ provides $\varepsilon$-convex hull distortion with high probability.

Define $T = \mathrm{conv}(X)$. We apply Theorem~\ref{DirksenInequality} as follows. For some $m$ which we will choose later, let $\Pi^0$ be a matrix in $\BR^{m \times d}$ with i.i.d.\ subgaussian entries with mean $0$, variance $1$, and some subgaussian norm $C_1.$  Let $\Pi$ be the scaled matrix, i.e. $\frac{1}{\sqrt{m}} \cdot \Pi^0.$  Let $\Pi_i$ denote the $i$th row of $\Pi^0$ and for any $t \in \BR^d,$ let $X_{t, i} = \Pi_i t.$  Finally, let $T_k$ be the subset of $T$ of points with norm at most $2^{-k},$ i.e., $T_k = \{x \in T: \|x\|_2 \le 2^{-k}\}$.

    First, note that 
\[A_t := \frac{1}{m} \sum\limits_{i = 1}^{m} \left((\Pi_i t)^2 - \BE(\Pi_i t)^2\right) = \frac{1}{m} \left(\| \Pi^0 t\|_2^2 - m \cdot \|t\|_2^2\right) =  \|\Pi t\|_2^2 - \|t\|_2^2.\]

    For any $t \in T$ and $1 \le i \le m,$ define $X_{t, i} = \Pi_i t$. Then, $A_t$ corresponds to the definition in Theorem~\ref{DirksenInequality}. Note that for any $s, t \in T$ and for any $1 \le i \le m,$ $X_{s, i} - X_{t, i} = X_{s-t, i},$ which is a subgaussian with mean $0$, variance $\|s-t\|_2^2,$ and subgaussian norm at most $k \|s-t\|_2 C_1$ by Proposition~\ref{IIDSubgaussianAddition}. Therefore, if we define $d_{\psi_2}(s, t) = \max_{1 \le i \le m} \|X_{s, i} - X_{t, i}\|_{\psi_2},$ as in Theorem \ref{DirksenInequality}, $d_{\psi_2}(s, t) \le k \|s-t\|_2 C_1.$  As a result, we have the following:

\begin{proposition}
    $\overline{\Delta}_{\psi_2}(T_k) = O(2^{-k})$.
\end{proposition}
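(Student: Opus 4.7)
The plan is to unpack $\overline{\Delta}_{\psi_2}(T_k)$ using its definition and directly invoke Proposition~\ref{IIDSubgaussianAddition}. For any fixed $t \in T_k$ and any $1 \le i \le m$, the random variable $X_{t,i} = \Pi_i t = \sum_{j=1}^d t_j (\Pi^0)_{i,j}$ is a linear combination of the $d$ i.i.d.\ mean-zero subgaussian entries in the $i$th row of $\Pi^0$, each with subgaussian norm $C_1$, with coefficients $t_1, \dots, t_d$. Proposition~\ref{IIDSubgaussianAddition} therefore gives a universal constant $k_0$ (renaming the constant from that proposition to avoid collision with the index $k$) such that
\[
\|X_{t,i}\|_{\psi_2} \;\le\; k_0 \, \|t\|_2 \, C_1.
\]

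Since this bound is uniform in $i$, taking the max over $1 \le i \le m$ yields $\max_i \|X_{t,i}\|_{\psi_2} \le k_0 C_1 \|t\|_2$. Then taking the supremum over $t \in T_k$ and using the defining property of $T_k$, namely $\|t\|_2 \le 2^{-k}$, gives
\[
\overline{\Delta}_{\psi_2}(T_k) \;=\; \sup_{t \in T_k} \max_{1 \le i \le m} \|X_{t,i}\|_{\psi_2} \;\le\; k_0 C_1 \cdot 2^{-k} \;=\; O(2^{-k}),
\]
as desired, since $C_1$ and $k_0$ are absolute constants.

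There is no real obstacle here; the proposition is essentially an immediate corollary of Proposition~\ref{IIDSubgaussianAddition} together with the norm bound built into the definition of $T_k$. The only minor care needed is to avoid overloading the symbol $k$ (used both as the index of $T_k$ and as the constant in Proposition~\ref{IIDSubgaussianAddition}), which can be handled by renaming one of them.
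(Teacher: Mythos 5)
Your proof is correct and matches the paper's approach exactly: both deduce from Proposition~\ref{IIDSubgaussianAddition} that $\|X_{t,i}\|_{\psi_2} \le O(\|t\|_2 C_1)$ and then use the defining bound $\|t\|_2 \le 2^{-k}$ on $T_k$; the paper's version is just terser because it invokes the $d_{\psi_2}(s,t)\le k\|s-t\|_2 C_1$ estimate it had already recorded in the preceding paragraph. Your note about the clash of the symbol $k$ is a fair stylistic observation but does not change anything.
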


\begin{proof}
    Since any point $t \in T_k$ has Euclidean norm at most $2^{-k}$, the conclusion is immediate.
\end{proof}
    
    We also note the following:

\begin{proposition}\label{prop:majorizing-measures}
    $\gamma_2(T_k, \ell_2) = O(\sqrt{\log n}),$ where $\ell_2$ represents the standard Euclidean distance.
\end{proposition}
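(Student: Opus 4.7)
The plan is to relate $\gamma_2(T_k, \ell_2)$ to a Gaussian width via Talagrand's majorizing measures theorem, and then exploit the fact that $T$ is the convex hull of only $n$ unit vectors.

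Let $g \sim N(0, I_d)$ be a standard Gaussian in $\BR^d$. Majorizing measures gives, for any bounded $S \subset \BR^d$ and any basepoint $s_0 \in S$, a bound of the form $\gamma_2(S, \ell_2) \le C \cdot \BE \sup_{s \in S} \langle g, s - s_0 \rangle$ for an absolute constant $C$. Translation invariance of $\gamma_2$ lets us assume $s_0 = 0$ after centering, so it suffices to bound the Gaussian width $\BE \sup_{s \in S} \langle g, s \rangle$. Applied with $S = T$: since $T = \mathrm{conv}(X)$ and $\langle g, \cdot \rangle$ is linear, its supremum over $T$ is attained at a vertex, giving $\sup_{t \in T} \langle g, t \rangle = \max_{i \in [n]} \langle g, x_i \rangle$. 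Each $\langle g, x_i \rangle$ is $N(0, 1)$ since $\|x_i\|_2 = 1$, so the classical Gaussian-maximum bound yields $\BE \max_i \langle g, x_i \rangle \le \sqrt{2 \log n}$, proving $\gamma_2(T, \ell_2) = O(\sqrt{\log n})$.

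To pass from $T$ to $T_k$, note that monotonicity of the $\gamma_2$ functional under subset inclusion holds up to an absolute constant (again via majorizing measures: $S \subset S'$ implies $\sup_{s \in S} \langle g, s \rangle \le \sup_{s \in S'} \langle g, s \rangle$ pointwise in $g$). Hence $\gamma_2(T_k, \ell_2) = O(\gamma_2(T, \ell_2)) = O(\sqrt{\log n})$.

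The main potential obstacle is purely notational: different sources state majorizing measures with slightly different conventions (whether the sets $S_r$ in an admissible sequence must lie inside the indexing set or may be arbitrary points of the ambient metric space), and one must handle the fact that $T_k$ need not contain the origin. Both issues are resolved by translation invariance of $\gamma_2$ and the equivalence of the two conventions up to an absolute constant. An alternative route avoiding Talagrand's theorem is to directly construct an admissible sequence for $T$ via Maurey's empirical method (using averages of subsamples of vertices as $\ell_2$-nets of $T$), but this is more laborious and yields the same $O(\sqrt{\log n})$ bound; the key structural input is the same, namely that $T$ has only $n$ extreme points.
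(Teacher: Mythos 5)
Your proof is correct and takes essentially the same route as the paper: invoke the majorizing measures theorem to reduce bounding $\gamma_2(T_k,\ell_2)$ to bounding the Gaussian width of $T \supset T_k$, observe that linearity pushes the supremum over the convex hull $T$ to its $n$ vertices, and apply the standard $\BE\max_{i\le n}\langle g,x_i\rangle = O(\sqrt{\log n})$ bound for (possibly dependent) standard normals. Your extra remarks on translation invariance and the two conventions for admissible sequences are fine but not needed beyond what the paper already does implicitly by passing through Gaussian width (which is monotone under inclusion pointwise in $g$).
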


\begin{proof}
    The Majorizing Measures theorem \cite{Talagrand14} gives $\gamma_2(T_k, \ell_2) = \Theta\left(\BE_g (\sup_{x \in T_k} \langle g, x \rangle \right)),$ where $g$ is a $d$-dimensional vector of i.i.d.\ standard normal random variables. Also, $\BE_g (\sup_{x \in T_k} \langle g, x \rangle) \le \BE_g (\sup_{x \in T} \langle g, x \rangle)$ because $T_k \subset T$. However, since $T = \mathrm{conv}(X)$ and since $\langle g, x \rangle$ is a linear function of $x$, we have $\sup_{x \in T} \langle g, x \rangle = \sup_{x\in X} \langle g, x\rangle.$  Since the $x_i$'s are unit norm vectors, then each $g_i := \langle g, x_i\rangle$ is standard normal, which implies (even if they are dependent) that $\BE \sup_i |g_i| = O(\sqrt{\log n}).$  This means that $\gamma_2(T_k, \ell_2) = O(\BE_g (\sup_{x \in T} \langle g, x \rangle)) = O(\sqrt{\log n}),$ so the proof is complete.
\end{proof}

This allows us to bound $\gamma_2(T, d_{\psi_2})$ as follows.
\begin{corollary}
    $\gamma_2(T_k, d_{\psi_2}) = O(\sqrt{\log n}).$
\end{corollary}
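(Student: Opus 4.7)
The plan is to exploit the pointwise domination of the subgaussian semi-metric by a constant multiple of the Euclidean metric, and then invoke Proposition \ref{prop:majorizing-measures}. Specifically, the preceding discussion has already established that for all $s,t \in T$,
\[
d_{\psi_2}(s,t) \;=\; \max_{1 \le i \le m} \|X_{s,i} - X_{t,i}\|_{\psi_2} \;\le\; k C_1 \|s-t\|_2,
\]
where $k$ is the absolute constant from Proposition \ref{IIDSubgaussianAddition} and $C_1$ is the (fixed) subgaussian norm of the entries of $\Pi^0$. This bound transfers verbatim to $T_k \subseteq T$.

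Next I would use the fact that $\gamma_2$ is monotone in the underlying semi-metric. Recall that $\gamma_2(T_k, d)$ is defined as an infimum, over admissible sequences $\{S_r\}$, of $\sup_{t \in T_k} \sum_{r \ge 0} 2^{r/2} d(t, S_r)$. Since $d_{\psi_2}(t, s) \le k C_1 \|t-s\|_2$ for every pair, we get $d_{\psi_2}(t, S_r) \le k C_1 \cdot \ell_2(t, S_r)$ for every $r$ and every $t \in T_k$. Summing over $r$ with weights $2^{r/2}$ and taking the supremum over $t$ yields
\[
\sup_{t \in T_k} \sum_{r \ge 0} 2^{r/2} d_{\psi_2}(t, S_r) \;\le\; k C_1 \cdot \sup_{t \in T_k} \sum_{r \ge 0} 2^{r/2} \ell_2(t, S_r),
\]
for \emph{any} admissible sequence $\{S_r\}$. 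Taking the infimum over admissible sequences on both sides gives $\gamma_2(T_k, d_{\psi_2}) \le k C_1 \cdot \gamma_2(T_k, \ell_2)$.

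Finally I would plug in Proposition \ref{prop:majorizing-measures}, which states $\gamma_2(T_k, \ell_2) = O(\sqrt{\log n})$. Since $k C_1$ is a fixed constant independent of $n$ (and of the index $k$ of $T_k$), this yields $\gamma_2(T_k, d_{\psi_2}) = O(\sqrt{\log n})$, completing the corollary. There is essentially no obstacle here: the only content is the monotonicity observation, and everything nontrivial has been packaged into the earlier proposition (via Majorizing Measures and the fact that $T = \mathrm{conv}(X)$ has the same linear functional supremum as the finite set $X$) and into the subgaussian-norm bound on linear combinations.
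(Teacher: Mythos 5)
Your argument is correct and is exactly the paper's argument: the paper's one-line proof invokes $d_{\psi_2}(s,t) = O(\|s-t\|_2)$ together with Proposition~\ref{prop:majorizing-measures}, and you have simply spelled out the (standard) monotonicity of $\gamma_2$ under pointwise domination of semi-metrics that the paper leaves implicit. No gap; same route.
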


\begin{proof}
    As $d_{\psi_2}(s, t) = O(\|s-t\|_2)$ for any points $s, t,$ the result follows from Proposition~\ref{prop:majorizing-measures}.
\end{proof}

Therefore, using $T = T_k$ in Theorem \ref{DirksenInequality} and varying $k$ gives us the following.

\begin{theorem} \label{MainIdea}
    Suppose that $\varepsilon, \delta < 1$ and $m = \Theta\left(\frac{1}{\varepsilon^2} \log \frac{n \log (2/\varepsilon)}{\delta}\right).$ Then, there exists some constant $C_2$ such that for all $k \ge 0,$
\[\BP\left(\sup\limits_{t \in T_k}  \left|\| \Pi t\|_2^2 - \|t\|_2^2\right| \ge C_2\left(\varepsilon^2 + \varepsilon \cdot 2^{-k}\right)\right) \le \frac{\delta}{n \log (2/\varepsilon)}.\]
    Consequently, with probability at least $1 - \delta/n,$ we have that for all $t \in T$,
\[\left|\|\Pi t\|_2 - \|t\|_2\right| = O(\varepsilon).\]
\end{theorem}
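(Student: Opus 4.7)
The plan is to apply Dirksen's inequality (Theorem~\ref{DirksenInequality}) separately to each nested piece $T_k$ and then union-bound over a logarithmic number of scales. The first bullet is essentially a direct substitution into Theorem~\ref{DirksenInequality}; the consequence requires a two-case argument that leverages the multiscale decomposition to convert squared-norm concentration into norm concentration.

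For the first part, fix $k\ge 0$ and choose $u=\Theta(\log(n\log(2/\varepsilon)/\delta))$ so that $e^{-u}\le\delta/(n\log(2/\varepsilon))$. Plugging the bounds $\overline{\Delta}_{\psi_2}(T_k)=O(2^{-k})$ and $\gamma_2(T_k,d_{\psi_2})=O(\sqrt{\log n})$ into Theorem~\ref{DirksenInequality}, the four terms in the deviation bound become
\[
O\!\left(\tfrac{\log n}{m}\right)\;+\;O\!\left(2^{-k}\sqrt{\tfrac{\log n}{m}}\right)\;+\;O\!\left(2^{-2k}\sqrt{\tfrac{u}{m}}\right)\;+\;O\!\left(2^{-2k}\tfrac{u}{m}\right).
\]
With $m=\Theta(\varepsilon^{-2}\log(n\log(2/\varepsilon)/\delta))$, we have $\log(n)/m=O(\varepsilon^2)$ and $u/m=O(\varepsilon^2)$. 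Using $2^{-2k}\le 2^{-k}$ for $k\ge 0$, each term is at most $O(\varepsilon^2+\varepsilon\cdot 2^{-k})$, which yields the first displayed inequality with failure probability at most $\delta/(n\log(2/\varepsilon))$.

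For the consequence, let $K=\lceil\log_2(2/\varepsilon)\rceil$ and union-bound the above event over $k=0,1,\ldots,K$; the total failure probability is at most $\delta/n$. Condition on this event. Since $T=\mathrm{conv}(X)$ with $X$ consisting of unit vectors, every $t\in T$ has $\|t\|_2\le 1$, i.e.\ $t\in T_0$. Fix $t\in T$ and consider two cases. If $\|t\|_2\ge 2^{-K}$, pick the largest $k\le K$ with $t\in T_k$, so $2^{-k}\le 2\|t\|_2$; then
\[
\bigl|\|\Pi t\|_2-\|t\|_2\bigr|\;\le\;\frac{|\|\Pi t\|_2^2-\|t\|_2^2|}{\|t\|_2}\;\le\;\frac{C_2(\varepsilon^2+2\varepsilon\|t\|_2)}{\|t\|_2}\;=\;O(\varepsilon),
\]
using $\varepsilon^2/\|t\|_2\le \varepsilon^2/2^{-K}=O(\varepsilon)$. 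If instead $\|t\|_2<2^{-K}\le\varepsilon$, then $t\in T_K$, so $\|\Pi t\|_2^2\le\|t\|_2^2+O(\varepsilon^2+\varepsilon\cdot 2^{-K})=O(\varepsilon^2)$, hence $\|\Pi t\|_2=O(\varepsilon)$ and therefore $|\|\Pi t\|_2-\|t\|_2|\le\|\Pi t\|_2+\|t\|_2=O(\varepsilon)$.

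The main obstacle is not any individual calculation but rather arranging the scales so that the Dirksen bound gives additive error proportional to $\varepsilon\|t\|_2$ for large $t$ and purely $\varepsilon^2$ for small $t$; this is precisely why we introduced the nested family $T_k$ and must pay the $\log(2/\varepsilon)$ factor inside the log of $m$. Once the scale decomposition is in place, turning the $\ell_2^2$ concentration into $\ell_2$ concentration via the identity $\|\Pi t\|_2-\|t\|_2=(\|\Pi t\|_2^2-\|t\|_2^2)/(\|\Pi t\|_2+\|t\|_2)$ (together with the small-norm fallback) is routine.
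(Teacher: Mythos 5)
Your proof is correct and follows essentially the same strategy as the paper's: plug the $\overline{\Delta}_{\psi_2}(T_k) = O(2^{-k})$ and $\gamma_2(T_k,d_{\psi_2}) = O(\sqrt{\log n})$ bounds into Dirksen's inequality with $u = \Theta(\log(n\log(2/\varepsilon)/\delta))$ to get the per-scale tail bound, then union-bound over the $O(\log(1/\varepsilon))$ scales and convert squared-norm concentration into norm concentration by dividing by $\|\Pi t\|_2 + \|t\|_2$ on the shell where $\|t\|_2 \gtrsim 2^{-k}$, with the small-norm case $\|t\|_2 \lesssim \varepsilon$ handled directly. The only cosmetic difference is that you locate each $t$ at its appropriate scale on the fly, whereas the paper explicitly partitions $T$ into shells $T_k' = T_k \setminus T_{k+1}$; these are the same argument.
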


\begin{proof}
    The first part follows from Theorem \ref{DirksenInequality} and the previous propositions. Note that
\[\sup\limits_{t \in T_k}  \left|\|\Pi t\|_2^2 - \|t\|_2^2\right| = \sup\limits_{t \in T_k} |A_t|.\]
    Moreover, $\frac{1}{m} \gamma_2^2(T_k, d_{\psi_2}) = O(\varepsilon^2)$ and $\frac{1}{\sqrt{m}} \overline{\Delta}_{\psi_2}(T_k) \gamma_2(T_k, d_{\psi_2}) = O(\varepsilon \cdot 2^{-k}).$ If we let $u = \ln \frac{n \log (2/\varepsilon)}{\delta},$ then $\sqrt{u} \frac{\overline{\Delta}_{\psi_2}^2(T_k)}{\sqrt{m}} = O(\varepsilon \cdot 2^{-2k}),$ and $u \frac{\overline{\Delta}_{\psi_2}^2(T)}{m} = O(\varepsilon^2 \cdot 2^{-2k}).$  The first result now follows immediately from Theorem \ref{DirksenInequality}, as $e^{-u} = \frac{\delta}{n \log (2/\varepsilon)}.$ Note that it is possible for $T_k$ to be empty, but in this case we see that $\sup_{t \in T_k} |A_t| = 0$ so the first result is immediate.
    
    For the second part, assume WLOG that $\varepsilon^{-1} = 2^\ell$ for some $\ell.$  Define $T_0', T_1', \dots, T_{\ell}'$ such that $T_{\ell}' = T_{\ell}$ and for all $k < \ell,$ $T_k' = T_k \backslash T_{k+1}.$  Then, $T_0', \dots, T_{\ell}'$ forms a partition of $T$, since $\|x\|_2 \le 1$ if $x \in T$. Note that if $T \in T_{\ell},$ then with probability at least $1 - \frac{\delta}{n\log(2/\varepsilon)}$, $\left|\|\Pi t\|_2^2 - \|t\|_2^2\right| = O(\varepsilon^{2})$ and thus $\|\Pi t\|_2^2 = O(\varepsilon^2)$ and $\left|\|\Pi t\|_2 - \|t\|_2\right| = O(\varepsilon).$  If $T \in T_k'$ for some $k < \ell,$ then with probability at least $1 - \frac{\delta}{n\log(2/\varepsilon)}$, $\left|\|\Pi t\|_2^2 - \|t\|_2^2\right| = \left|\|\Pi t\|_2 + \|t\|_2\right| \cdot \left|\|\Pi t\|_2 - \|t\|_2\right| = O(\varepsilon \cdot 2^{-k})$.  This means that since $\left|\|\Pi t\|_2 + \|t\|_2\right| \ge 2^{-(k+1)}$, we must have that $\left|\|\Pi t\|_2 - \|t\|_2\right| = O(\varepsilon).$  
    
    Therefore, by union bounding over all $0 \le k \le \ell$, with probability at least
\[1 - (\ell+1) \frac{\delta}{n \log (2/\varepsilon)} = 1 - \frac{(\ell+1)\delta}{n (\ell+1)} = 1 - \frac{\delta}{n},\]
    for all $t \in T,$ $\left|\|\Pi t\|_2 - \|t\|_2\right| = O(\varepsilon).$  Thus, we are done.
\end{proof}

We therefore have the following immediate corollary.

\begin{corollary} \label{MainL1Result}
    For $1 \le \varepsilon^{-2} < n$ and for any $X = \{x_1, ..., x_n\} \subset S^{d-1}$, with probability at least $1 - poly(n)^{-1}$, a randomly chosen $\Pi$ with $m = \Omega(\varepsilon^{-2} \log n)$ provides $\varepsilon$-convex hull distortion for $X$.
\end{corollary}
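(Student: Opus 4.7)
The plan is to derive Corollary~\ref{MainL1Result} as an essentially immediate consequence of Theorem~\ref{MainIdea}, since the theorem's conclusion $|\|\Pi t\|_2 - \|t\|_2| = O(\varepsilon)$ for all $t \in T = \mathrm{conv}(X)$ is exactly the statement that $\Pi$ provides $O(\varepsilon)$-convex hull distortion for $X$ in the sense of Definition~\ref{def:chull}. Only two routine adjustments are required: boosting the success probability from $1-\delta/n$ to $1-\mathrm{poly}(n)^{-1}$, and tightening the distortion bound from $O(\varepsilon)$ to a strict $<\varepsilon$.

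For the probability boost, I would invoke Theorem~\ref{MainIdea} with $\delta = n^{-c}$ for some constant $c\ge 1$; the failure probability then becomes $\delta/n = n^{-(c+1)} = \mathrm{poly}(n)^{-1}$. One then has to verify that the dimension prescribed by the theorem, $m = \Theta(\varepsilon^{-2}\log(n\log(2/\varepsilon)/\delta))$, collapses to $O(\varepsilon^{-2}\log n)$. In the hypothesized regime $1 \le \varepsilon^{-2} < n$ we have $\log(2/\varepsilon) = O(\log n)$ and hence $\log\log(2/\varepsilon) = O(\log\log n)$, so with $\delta = n^{-c}$ the logarithm inside the $\Theta$ simplifies to $O(\log n)$, as required.

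To convert the $O(\varepsilon)$ conclusion into the strict inequality $<\varepsilon$ demanded by Definition~\ref{def:chull}, I would simply reapply Theorem~\ref{MainIdea} with $\varepsilon$ replaced by $\varepsilon/(2C')$, where $C'$ is the absolute constant hidden in the $O(\cdot)$; this rescaling only multiplies $m$ by a constant factor, so the asymptotic bound $m = \Omega(\varepsilon^{-2}\log n)$ is unchanged. There is no genuine obstacle here --- the corollary is parameter bookkeeping rather than new mathematical content, which is why the authors flag it as ``immediate''. The substantive work was done in Theorem~\ref{MainIdea} via the chaining estimate from Theorem~\ref{DirksenInequality} together with the uniform norm bound on $T_k$ and the Majorizing Measures bound on $\gamma_2(T_k,\ell_2)$ from Proposition~\ref{prop:majorizing-measures}.
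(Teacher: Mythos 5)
Your proposal is correct and matches the paper's (implicit) derivation: the paper presents the corollary as ``immediate'' from Theorem~\ref{MainIdea}, and your parameter bookkeeping---setting $\delta = n^{-c}$, observing $\log(n\log(2/\varepsilon)/\delta) = O(\log n)$ in the regime $1 \le \varepsilon^{-2} < n$, and rescaling $\varepsilon$ to absorb the hidden constant---is exactly what is intended.
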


\subsection{Completion of the Terminal Embedding} \label{TerminalJL}

We note that the methods for completing the terminal embedding in this section are very similar to those in \cite{MahabadiMMR18}. Specifically, our proofs for Lemmas \ref{ReductionLemma1} and \ref{ReductionLemma2} are based on the proofs of \cite[Lemma 3.1]{MahabadiMMR18} and \cite[Theorem 1.5]{MahabadiMMR18}, respectively.

\begin{lemma} \label{ReductionLemma1}
    Let $x_1, ..., x_n$ be nonzero points in $\BR^d$ and let $v_i = \frac{x_i}{\|x_i\|_2}.$  Suppose that $\Pi$ provides $\varepsilon$-convex hull distortion for $V = \{v_1, -v_1, ..., v_n, -v_n\}.$  Then, for any $u \in \BR^d$, there exists a $u' \in \BR^m$ such that $\|u'\|_2 \le \|u\|_2$ and $|\langle u', \Pi x_i \rangle - \langle u, x_i\rangle| \le \varepsilon \|u\|_2 \cdot \|x_i\|_2$ for every $x_i.$
\end{lemma}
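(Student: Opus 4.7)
My plan is to recast the desired conclusion as a min-max inequality, apply the von Neumann minimax theorem (Theorem~\ref{Minimax}), and then use the convex-hull distortion of $\Pi$ to bound the resulting inner minimization. This follows the same template as the proof of Lemma~\ref{lem:mmmr} in \cite{MahabadiMMR18}, with the convex-hull hypothesis replacing the pairwise distortion bound used there; that replacement is exactly what converts the $3\sqrt{\gamma}$ error in Lemma~\ref{lem:mmmr} into the clean $\varepsilon$ error we seek here.

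By homogeneity I may assume $\|u\|_2 = 1$, and since $\langle u',\Pi x_i\rangle = \|x_i\|_2\langle u',\Pi v_i\rangle$ and $\langle u,x_i\rangle = \|x_i\|_2\langle u,v_i\rangle$, it suffices to exhibit $u'$ in the unit ball $B := \{w\in\BR^m : \|w\|_2\le 1\}$ such that $|\langle u',\Pi v_i\rangle - \langle u,v_i\rangle| \le \varepsilon$ for every $i$. Define
\[\Phi(u',\lambda) := \sum_{i=1}^n \lambda_i\bigl(\langle u,v_i\rangle - \langle u',\Pi v_i\rangle\bigr), \qquad \Lambda := \{\lambda\in\BR^n : \|\lambda\|_1 \le 1\}.\]
The vertices of $\Lambda$ are $\pm e_1,\ldots,\pm e_n$ and $\Phi(u',\cdot)$ is linear, so the existence of the required $u'$ is equivalent to $\min_{u'\in B}\max_{\lambda\in\Lambda}\Phi(u',\lambda)\le\varepsilon$. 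Both $B$ and $\Lambda$ are compact convex sets and $\Phi$ is bilinear (hence continuous, convex in $u'$, concave in $\lambda$), so Theorem~\ref{Minimax} lets me swap min and max. Fixing $\lambda$ and writing $p := \sum_i\lambda_i v_i$, $s := \|\lambda\|_1$, the inner minimum equals $\langle u,p\rangle - \|\Pi p\|_2 \le \|p\|_2 - \|\Pi p\|_2$ by Cauchy-Schwarz; if $s>0$ then $p/s$ is a convex combination of elements of $V=\{\pm v_1,\ldots,\pm v_n\}$, so the $\varepsilon$-convex hull distortion hypothesis gives $\|p\|_2 - \|\Pi p\|_2 \le \varepsilon s \le \varepsilon$ after rescaling (the $s=0$ case is trivial).

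The only subtlety is the symmetrization: allowing signed $\lambda$ in $\Lambda$ is necessary so that the vertex set captures the absolute-value condition $|\langle u,v_i\rangle - \langle u',\Pi v_i\rangle|\le\varepsilon$, and this in turn forces $p/s$ to lie in $\mathrm{conv}(V)$ for the symmetrized $V=\{\pm v_i\}$ rather than in $\mathrm{conv}(\{v_i\})$. This is precisely why the hypothesis of the lemma is stated for the symmetrized set $V$. Beyond this bookkeeping, the argument is a straightforward combination of Cauchy-Schwarz, the minimax theorem, and the convex-hull property of $\Pi$, with no further technical obstacle.
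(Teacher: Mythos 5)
Your proof is correct and follows essentially the same route as the paper's: normalize to unit vectors, set up a bilinear (or in the paper's case, bilinear-minus-$\varepsilon\|\lambda\|_1$) payoff on the ball $B$ times the $\ell_1$-ball $\Lambda$, swap min and max via von Neumann, pick the optimal $u'$ proportional to $\Pi P$, and close with Cauchy–Schwarz and the convex-hull distortion hypothesis applied to $P/\|\lambda\|_1 \in \mathrm{conv}(V)$. The only cosmetic difference is that you keep the $\varepsilon$ on the right-hand side of the min-max inequality, whereas the paper absorbs the $-\varepsilon|\lambda_i|\|u\|_2$ term into $\Phi$ and compares to $0$; the argument is otherwise the same.
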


\begin{proof}
    This statement is trivial for for $u = 0,$ so assume $u \neq 0.$  It suffices to show that there always exists a $u'$ such that $\|u'\|_2 \le \|u\|_2$ and $|\langle u', \Pi v_i\rangle - \langle u, v_i \rangle| \le \varepsilon \|u\|_2$ for all $v_i.$  We will have that $|\langle u', \Pi x_i \rangle - \langle u, x_i \rangle| \le \varepsilon \|u\|_2 \cdot \|x_i\|_2$ by scaling.  Now, let $B$ be the ball in $\BR^m$ of radius $\|u\|_2$ and $\Lambda$ be the the unit $\ell_1$-ball in $\BR^n,$ where we write $\lambda \in \Lambda$ as $(\lambda_1, \dots, \lambda_n).$  Furthermore, define for $u' \in B, \lambda \in \Lambda,$
\[\Phi(u', \lambda) := \sum\limits_{i = 1}^{n} \left(\lambda_i (\langle u, v_i\rangle - \langle u', \Pi v_i\rangle) - \varepsilon |\lambda_i| \cdot \|u\|_2\right).\]
    We wish to show that there exists a $u' \in B$ such that for all $\lambda \in \Lambda,$ $\Phi(u', \lambda) \le 0.$  This clearly suffices by looking at $\lambda = \pm v_j$ for $1 \le j \le n.$
    
    Note that $\Phi$ is linear in $u'$ and concave in $\lambda,$ and that $B, \Lambda$ are compact and convex. Then, by the von Neumann minimax theorem \cite{vonNeumann28},
\[\min\limits_{u' \in B} \max\limits_{\lambda \in \Lambda} \Phi(u', \lambda) = \max\limits_{\lambda \in \Lambda} \min\limits_{u' \in B} \Phi(u', \lambda).\]
    Thus it suffices to show the right hand side is nonpositive, i.e.\ for any $u, \lambda$, there exists $u' \in B$ s.t.\ $\Phi(u', \lambda) \le 0.$  Defining $P = \sum \lambda_i v_i$, then for $u' = \|u\|_2 \cdot \frac{\Pi P}{\|\Pi P\|_2},$ it suffices to show for all $u, \lambda$ 
\[\langle u, P\rangle - \langle u', \Pi P\rangle = \langle u, P\rangle - \|u\|_2 \cdot \|\Pi P\|_2 \le \varepsilon \|\lambda\|_1 \|u\|_2.\]
    But in fact $\langle u, P \rangle \le \|u\|_2 \cdot \|P\|_2,$ so having that $\|P\|_2 - \|\Pi P\|_2 \le \varepsilon \|\lambda\|_1$ for all $\lambda \in \Lambda$ is sufficient.  For $\|\lambda\|_1 = 1$ this follows from $\Pi$ providing $\varepsilon$-convex hull distortion for $V$, and for $\|\lambda\|_1 < 1$, it follows because we can scale $\lambda$ so that $\|\lambda\|_1 = 1.$
\end{proof}

\begin{lemma} \label{ReductionLemma2}
    Let $x_1, ..., x_n \in \BR^d$ be distinct. Let $Y = \left\{\frac{x_i-x_j}{\|x_i-x_j\|_2}: i \neq j \right\}$. Moreover, suppose that $\Pi\in\BR^{m\times d}$ provides $\varepsilon$-convex hull distortion for $Y$. Then, for any $u \in \BR^d$, there exists an outer extension $f: \{x_1, \dots, x_n, u\} \to \BR^{m+1}$ with distortion $1 + O(\varepsilon),$ where $f(x_i) = \Pi x_i$.
\end{lemma}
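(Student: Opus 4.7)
The plan is to construct $f(u)$ via the MMMR-style outer extension and then verify the distortion bound by reducing the analysis to (a) the convex hull distortion of $\Pi$ applied to unit vectors in $Y$, and (b) Lemma~\ref{ReductionLemma1} applied to a translated instance, with the nearest-point structure supplying the right scale to cancel errors.

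Concretely, first choose $x_0 := \arg\min_{x_i} \|u - x_i\|_2$, the point in $X$ closest to $u$, and translate to work with the vectors $y_i := x_i - x_0$ and $w := u - x_0$. I would then apply Lemma~\ref{ReductionLemma1} to the $y_i$'s and the query $w$: the hypothesis of that lemma, namely $\varepsilon$-convex hull distortion for $\{\pm y_i/\|y_i\|_2\}$, follows from the given convex hull distortion of $\Pi$ on $Y$ because each $\pm y_i/\|y_i\|_2$ already lies in $Y$ (being $\pm(x_i-x_0)/\|x_i-x_0\|_2$), so its convex hull is a subset of $\mathrm{conv}(Y)$. This yields a vector $u'\in\BR^m$ with $\|u'\|_2\le \|w\|_2$ and $|\langle u',\Pi y_i\rangle - \langle w, y_i\rangle|\le \varepsilon \|w\|_2\|y_i\|_2$ for all $i$. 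I would then define the outer extension by $f(x_i) = (\Pi x_i, 0)$ and
\[
f(u) \;:=\; \bigl(\Pi x_0 + u',\ \sqrt{\|w\|_2^2 - \|u'\|_2^2}\bigr)\in\BR^{m+1},
\]
which is well-defined thanks to $\|u'\|_2\le\|w\|_2$.

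The verification then splits into three distance types. Distances $\|f(x_i)-f(x_j)\|_2 = \|\Pi(x_i-x_j)\|_2$ are handled directly by the convex hull distortion of $\Pi$ on the unit vector $(x_i-x_j)/\|x_i-x_j\|_2\in Y$. The distance $\|f(u)-f(x_0)\|_2$ comes out to be exactly $\|w\|_2 = \|u-x_0\|_2$ by design of the extra coordinate. For the key case $\|f(u)-f(x_i)\|_2$ with $i\neq 0$, I would expand
\[
\|f(u)-f(x_i)\|_2^2 \;=\; \|\Pi y_i\|_2^2 - 2\langle u',\Pi y_i\rangle + \|w\|_2^2
\]
and compare it to $\|u-x_i\|_2^2 = \|y_i\|_2^2 - 2\langle w, y_i\rangle + \|w\|_2^2$. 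The difference of these two quantities equals $(\|\Pi y_i\|_2^2 - \|y_i\|_2^2) - 2(\langle u',\Pi y_i\rangle - \langle w, y_i\rangle)$, which is controlled in absolute value by $O(\varepsilon)\|y_i\|_2^2 + O(\varepsilon)\|w\|_2\|y_i\|_2$ via the convex hull distortion bound squared and Lemma~\ref{ReductionLemma1}, respectively.

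Finally, I would convert this additive error into a multiplicative $(1\pm O(\varepsilon))$ bound on $\|f(u)-f(x_i)\|_2$ relative to $\|u-x_i\|_2$. This is where the nearest-point choice of $x_0$ is essential: since $\|w\|_2 = \|u-x_0\|_2\le \|u-x_i\|_2$ and $\|y_i\|_2\le \|u-x_0\|_2 + \|u-x_i\|_2 \le 2\|u-x_i\|_2$ by the triangle inequality, the error bound is dominated by $O(\varepsilon)\|u-x_i\|_2^2$, which yields $\|f(u)-f(x_i)\|_2^2 = (1\pm O(\varepsilon))\|u-x_i\|_2^2$ and hence distortion $1+O(\varepsilon)$. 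I expect the main delicate point to be the bookkeeping in this last step, namely arguing that both the quadratic-in-$\|y_i\|_2$ and the cross-term $\|w\|_2\|y_i\|_2$ errors can be simultaneously charged to $\|u-x_i\|_2^2$; this is exactly the place where the argument of \cite{MahabadiMMR18} loses a square root when it only has $1+\gamma$ distortion on $X$, and our stronger convex hull hypothesis is what avoids that loss.
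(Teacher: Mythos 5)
Your proof is correct and follows essentially the same approach as the paper: choose the nearest point $x_0$, apply Lemma~\ref{ReductionLemma1} after translation, define $f(u)$ via the MMMR-style extra coordinate, expand the squared distances, and charge the additive error to $\|u-x_i\|_2^2$ using the nearest-point property. The only cosmetic difference is in the last bookkeeping step: you bound $\|y_i\|_2 \le 2\|u-x_i\|_2$ directly by the triangle inequality, whereas the paper uses the algebraic inequality $\max(1,(x-1)^2) \ge (x^2+1)/5$ to show $\|u-x_i\|_2^2 \ge (\|w\|_2^2 + \|y_i\|_2^2)/5$ — both yield the same $O(\varepsilon)\|u-x_i\|_2^2$ conclusion.
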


\begin{proof}
    Note that the map $x\mapsto \Pi x$ yields a $1 + \varepsilon$-distortion embedding of $x_1, \dots, x_n$ into $\BR^m$ as it approximately preserves the norm of all points in $Y$. Therefore, we just have to verify that there exists $f(u) \in \BR^{m+1}$ such that $\|f(u) - \Pi x_i\|_2 = (1 \pm O(\varepsilon)) \|u-x_i\|_2$ for all $i$ and any $u \in \BR^d$. Fix $u \in \BR^d,$ and let $x_k$ be the closest point to $u$ among $\{x_1, \dots, x_n\}.$  By Lemma \ref{ReductionLemma1}, there exists $u' \in \BR^{m}$ such that $\|u'\|_2 \le \|u - x_k\|_2$ and for all $i$,
\begin{equation*}
    |\langle u', \Pi (x_i-x_k)\rangle - \langle u - x_k, x_i-x_k\rangle| \le \varepsilon \|u-x_k\|_2 \|x_i-x_k\|_2.
\end{equation*}

    Next, let $f(u) \in \BR^{m+1}$ be the point $(\Pi x_k + u', \sqrt{\|u-x_k\|_2^2 - \|u'\|_2^2})$. If $w_i := x_i-x_k,$ then
\begin{equation}
\|f(u) - f(x_i)\|_2^2 = \|u-x_k\|_2^2 - \|u'\|_2^2 + \|u' - \Pi w_i\|_2^2 = \|u-x_k\|_2^2 + \|\Pi w_i\|_2^2 - 2\langle u', \Pi w_i\rangle \label{eqn:bash1}
\end{equation}
    and
\begin{equation}
    \|u-x_i\|_2^2 = \|u-x_k\|_2^2 + \|w_i\|_2^2 - 2\langle u-x_k, w_i\rangle. \label{eqn:bash2}
\end{equation}

Since $\|u-x_i\|_2^2 \ge \|u-x_k\|_2^2$ and $\|u-x_i\|_2^2 \ge (\|w_i\|_2 - \|u-x_k\|_2)^2,$ we have that $\|u-x_i\|_2^2 \ge (\|u-x_k\|_2^2 + \|w_i\|_2^2)/5.$ This follows from the fact that $\max(1, (x-1)^2) \ge (x^2+1)/5$ for all $x \ge 0.$  Since $\|\Pi(x_i-x_k)\|_2 = (1 \pm \varepsilon) \|x_i-x_k\|_2$, we also have that 
\begin{equation*}
    \left|\|\Pi w_i\|_2^2 - \|w_i\|_2^2\right| = \left|\|\Pi(x_i-x_k)\|_2^2 - \|x_i-x_k\|_2^2\right| \le 3 \varepsilon \|x_i-x_k\|_2^2 = 3 \varepsilon \|w_i\|_2^2
\end{equation*}
for all $i, j,$ assuming $\varepsilon \le 1.$  Therefore, by subtracting Eq.~\eqref{eqn:bash2} from Eq.\eqref{eqn:bash1}, we have that
\begin{equation*}
\left|\|f(u) - f(x_i)\|_2^2 - \|u-x_i\|_2^2\right| \le 3 \varepsilon \|w_i\|_2^2 + 2\left|\langle u', \Pi w_i\rangle - \langle u-x_k, w_i\rangle\right| 
\end{equation*}
\begin{equation*}
\le 3\varepsilon \|w_i\|_2^2 + 2 \varepsilon \|u-x_k\|_2 \|w_i\|_2 \le 4\varepsilon\left(\|w_i\|_2^2 + \|u-x_k\|_2^2\right) \le 20 \varepsilon \|u-x_i\|_2^2, 
\end{equation*}
as desired.
\end{proof}

We summarize the previous results, which allows us to prove our main result.

\begin{theorem} \label{MainThm}
    For all $\varepsilon < 1$ and $x_1, \dots, x_n \in \BR^d,$ there exists $m = O(\varepsilon^{-2} \log n)$ and a (nonlinear) map $f: \BR^d \to \BR^m$ such that for all $1 \le i \le n$ and $u \in \BR^d,$ 
\[(1 - O(\varepsilon)) \|u-x_i\|_2 \le \|f(u)-f(x_i)\|_2 \le (1 + O(\varepsilon)) \|u
-x_i\|_2.\]
\end{theorem}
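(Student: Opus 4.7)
The plan is to assemble the pieces established in Sections \ref{L1JL} and \ref{TerminalJL}. Given $X=\{x_1,\ldots,x_n\}\subset\BR^d$, I would first dispose of the trivial case $\varepsilon^{-2}\ge n$ exactly as sketched at the end of Section \ref{sec:approach}: translate so that $0\in X$, take an orthonormal basis of $E:=\mathrm{span}(X)$ (of dimension at most $n-1$), and define an isometric terminal embedding into $\BR^n = \BR^{O(\varepsilon^{-2})}$ by sending $u$ to the coordinates of $\mathrm{proj}_E(u)$ together with the extra coordinate $\|\mathrm{proj}_{E^\perp}(u)\|_2$.

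So assume $1 \le \varepsilon^{-2} < n$. Form the normalized difference set $Y = \{(x_i-x_j)/\|x_i-x_j\|_2 : i\neq j\}\subset S^{d-1}$, of size $|Y| = n(n-1) = O(n^2)$. By Corollary \ref{MainL1Result} applied to $Y$, a random matrix $\Pi\in\BR^{m\times d}$ with i.i.d.\ subgaussian entries (scaled by $1/\sqrt m$) and $m = \Omega(\varepsilon^{-2}\log|Y|) = O(\varepsilon^{-2}\log n)$ provides $\varepsilon$-convex hull distortion for $Y$ with probability $1 - \mathop{poly}(n)^{-1}$. Fix any such $\Pi$.

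Now for each $u\in\BR^d$, Lemma \ref{ReductionLemma2} applied to this $\Pi$ produces an outer extension $f^{(u)}:\{x_1,\ldots,x_n,u\}\to\BR^{m+1}$ with $f^{(u)}(x_i) = (\Pi x_i, 0)$ for every $i$ and $\|f^{(u)}(u)-f^{(u)}(x_i)\|_2 = (1 \pm O(\varepsilon))\|u-x_i\|_2$ for all $i$. The crucial observation is that the values $f^{(u)}(x_i)$ depend only on $\Pi$ and on $x_i$, not on $u$. This lets me glue the extensions together: define $f:\BR^d\to\BR^{m+1}$ by $f(x_i) := (\Pi x_i, 0)$ for each $i$, and $f(u):= f^{(u)}(u)$ for $u\in\BR^d\setminus X$. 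Then for any $u\in\BR^d$ and any $x_i$,
\[
\|f(u)-f(x_i)\|_2 \;=\; \|f^{(u)}(u)-f^{(u)}(x_i)\|_2 \;=\; (1\pm O(\varepsilon))\,\|u-x_i\|_2,
\]
which is precisely the desired terminal-distortion guarantee with target dimension $m+1 = O(\varepsilon^{-2}\log n)$.

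The only real obstacle is bookkeeping — the heavy lifting was done already in the two preceding subsections (establishing convex hull distortion for a random subgaussian $\Pi$ via the Dirksen bound, and the minimax-plus-Pythagoras argument in Lemma \ref{ReductionLemma2}). The final assembly is routine, with the single subtle point being that the per-query extensions $f^{(u)}$ must agree on $X$ so that the global map is well-defined; this is automatic from the construction in Lemma \ref{ReductionLemma2}, which places the image of $X$ in the first $m$ coordinates with a $0$ in the last, independently of the query $u$.
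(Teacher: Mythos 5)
Your proposal is correct and takes essentially the same route as the paper: Corollary \ref{MainL1Result} yields a $\Pi$ with $\varepsilon$-convex hull distortion for $Y$, Lemma \ref{ReductionLemma2} yields the per-query extension $f^{(u)}$, and the gluing works because $f^{(u)}|_X = (\Pi\,\cdot\,,0)$ is independent of $u$. Your explicit disposal of the regime $\varepsilon^{-2}\ge n$ is a small extra bit of care --- needed since Corollary \ref{MainL1Result} is stated only for $\varepsilon^{-2}<n$ --- which the paper handles in the discussion in Section \ref{sec:approach} rather than repeating inside the proof of Theorem \ref{MainThm}.
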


\begin{proof}
   By Corollary~\ref{MainL1Result}, there exists a $\Pi \in \BR^{m \times d}$ with $m = \Theta(\varepsilon^{-2} \log n)$ that provides $\varepsilon$-convex hull distortion for $Y$, where $Y$ is defined in Lemma \ref{ReductionLemma2}. By Lemma \ref{ReductionLemma2}, for each $u\in\BR^d$ there exists an outer extension $f^{(u)}: \{x_1, \dots, x_n, u\} \to \BR^{m+1}$ with distortion $1+O(\varepsilon)$ sending $x_i$ to $(\Pi x_i, 0)$. Therefore, we map $x_i \mapsto (\Pi x_i, 0)$ and map each $u \not\in \{x_1, ..., x_n\}$ to $f^{(u)}(u)$, which gives us a terminal embedding to $m = O(\varepsilon^{-2} \log n)$ dimensions with distortion $1+O(\varepsilon).$
\end{proof}

\subsection{Algorithm to Construct Terminal Embedding}

We briefly note how one can produce a terminal embedding into $\BR^{m}$ with a Monte Carlo randomized polynomial time algorithm, where $m = O(\varepsilon^{-2} \log n).$ By choosing a random $\Pi$ from Subsection \ref{L1JL}, we get with at least $1 - n^{-\Theta(1)}$ probability a matrix providing $\varepsilon$-convex hull distortion for our set $Y$ in Lemma \ref{ReductionLemma2}. To map any point in $\BR^d$ into $\BR^{m+1}$ dimensions, for any $u \in \BR^d$, it suffices to find a $u' \in \BR^m$ such that if $x_k$ is the point in $X$ closest to $u$, $||u'||_2 \le ||u - x_k||_2$ and for all $i$,
\[|\langle u', \Pi (x_i-x_k)\rangle - \langle u - x_k, x_i-x_k\rangle| \le \varepsilon \|u-x_k\|_2 \|x_i-x_k\|_2.\]
Assuming that $\Pi$ provides an $\varepsilon$-convex hull distortion for $Y$, such a $u'$ exists for all $u$, which means that $u'$ can be found with semidefinite programming in polynomial time, as noted in \cite{MahabadiMMR18}.

\end{document}